\def\doi{8(3:25)2012}
\newcommand{\ox}{X}
\newcommand{\obp}{B_P}
\newcommand{\defn}{\overset{\triangle}{=}}
\newcommand{\FF}{\mathit{F}}
\newcommand{\TT}{\mathit{T}}
\newcommand{\precond}{\mathit{Pre}}
\newcommand{\MEM}{\mathit{MEM}}
\newcommand{\EQ}{\mathit{EQ}}
\newcommand{\YES}{\mathit{YES}}
\newcommand{\NO}{\mathit{NO}}
\newcommand{\timplies}{\Rightarrow}
\newcommand{\nimplies}{\nRightarrow}
\newcommand{\valueof}[1]{[\![{#1}]\!]}
\newcommand{\symbols}[1]{\sigma({#1})}
\newcommand{\goesto}[1]{\overset{{#1}}{\longrightarrow}}
\newcommand{\Prop}{\mathit{QF}}
\newcommand{\Bool}{\mathit{Bool}}
\newcommand{\CEX}{\mathit{CEX}}
\newcommand{\set}[1]{\{#1\}}
\newcommand{\Val}[1]{\mathit{Val}_{{#1}}}
\newcommand{\assertion}[1]{\mathtt{\{}\ #1\ \mathtt{\}}}
\newcommand{\tab}{\ \ \ \ }
\newcommand{\restrict}[1]{\!\!\downarrow_{{#1}}}
\newcommand{\false}{\texttt{F}}
\newcommand{\true}{\texttt{T}}
\newcommand{\zap}[1]{#1}
\newcommand{\etal}{\textit{et al.}}
\definecolor{light-gray}{gray}{0.8}
\newcommand{\shade}[1]{\colorbox{light-gray}{\color{black}{#1}}}
\begin{document}

\title[Predicate Generation for Learning-Based Invariant
Inference]{Predicate Generation for Learning-Based Quantifier-Free
  Loop Invariant Inference\rsuper*}

\author[W.~Lee]{Wonchan Lee\rsuper a}
\address{{\lsuper{a,d}}Seoul National University, Korea}
\email{\{wclee, kwang\}@ropas.snu.ac.kr}

\author[Y.~Jung]{Yungbum Jung\rsuper b}
\address{{\lsuper b}Fasoo.com, Korea and Seoul National University, Korea}
\email{yb@fasoo.com}

\author[B.~Wang]{Bow-Yaw Wang\rsuper c}
\address{{\lsuper c}Academia Sinica, Taiwan}
\email{bywang@iis.sinica.edu.tw}

\author[K.~Yi]{Kwangkuen Yi\rsuper d}
 
\subjclass{F.3.1}
\keywords{loop invariant, algorithmic learning, predicate generation,
  interpolation}

\titlecomment{{\lsuper *}This paper is a revised and extended version of the
  paper ``Predicate Generation for Learning-Based Quantifier-Free Loop
  Invariant Inference'' that has been published in the proceedings of
  TACAS 2011~\cite{JLWY:11:PGLBQFLII}. This work was supported by the
  Engineering Research Center of Excellence Program of Korea Ministry
  of Education, Science and Technology(MEST) / National Research
  Foundation of Korea(NRF) (Grant 2012-0000468), National Science
  Council of Taiwan Grant Numbers 99-2218-E-001-002-MY3 and
  100-2221-E-002-116-, National Science Foundation (award
  no. CNS0926181), and by Republic of Korea Dual Use Program
  Cooperation Center(DUPC) of Agency for Defense Development(ADD)}

\maketitle

\begin{abstract}
  We address the predicate generation problem in the context of loop
  invariant inference. Motivated by the interpolation-based
  abstraction refinement technique, we apply the interpolation theorem
  to synthesize predicates implicitly implied by program texts. Our
  technique is able to improve the effectiveness and efficiency of the
  learning-based loop invariant inference algorithm of Jung, Kong,
  Wang and Yi (2010). We report experimental results of examples from
  Linux, SPEC2000, and the Tar utility.
\end{abstract}

\section{Introduction}
\label{section:introduction}


One way to prove that an annotated loop satisfies its pre-
and post-conditions is by giving loop invariants. 
In an annotated loop, pre- and post-conditions specify intended
effects of the loop. The actual behavior of the annotated loop
however does not necessarily conform to its specification. Through
loop invariants, verification tools can check whether the
annotated loop fulfills its specification
automatically~\cite{FM:04:MPVC}.


Finding loop invariants is tedious and sometimes requires
intelligence. Recently, an automated technique based on algorithmic
learning and predicate abstraction is proposed~\cite{VMCAI10}. Given a
fixed set of atomic predicates and an annotated loop, the
learning-based technique can infer a quantifier-free loop invariant
over the given atomic predicates. By employing a learning algorithm
and a mechanical teacher, the new technique is able to generate loop
invariants without constructing abstract models nor computing fixed
points.


As in other techniques based on predicate abstraction, the selection
of atomic predicates is crucial to the effectiveness of the
learning-based technique. Oftentimes, users extract atomic predicates
from program texts heuristically. If this simple strategy does not
yield necessary atomic predicates to express any loop invariants the
loop invariant inference algorithm will not be able to infer a loop
invariant. Even when the heuristic does give necessary atomic
predicates, it may select too many redundant predicates and impede the
efficiency of loop invariant inference algorithm.


One way to circumvent this problem is to generate atomic predicates by
need.  Several techniques have been developed to synthesize atomic
predicates by
interpolation~\cite{EsparzaKS06,Jhala06,McMillan:05:ITP,McMillan06}. Let
$A$ and $B$ be logic formulae. An interpolant $I$ of $A$ and $B$ is a
formula such that $A \Rightarrow I$ and $I \wedge B$ is
inconsistent. Moreover, the non-logical symbols in $I$ must occur in
both $A$ and $B$.  By Craig's interpolation theorem, an interpolant
$I$ always exists for any first-order formulae $A$ and $B$ when $A
\wedge B$ is inconsistent~\cite{craig}. The interpolant $I$ can be
seen as a concise summary of $A$ with respect to $B$. Indeed, many
abstraction refinement techniques for software model
checking~\cite{EsparzaKS06,POPL04,Jhala06,McMillan:05:ITP,McMillan06}
have used interpolation to synthesize atomic predicates.


Inspired by the refinement technique in software model checking, we
develop an \\ interpolation-based technique to synthesize atomic
predicates in the context of learning-based loop invariant
inference. Our algorithm does not add new atomic predicates by
interpolating invalid execution paths in control flow graphs.  We
instead interpolate the loop body with purported loop invariants from
the learning algorithm. We adopt the existing interpolating theorem
provers~\cite{csisat,princess,mathsat4,McMillan:05:ITP} for the
interpolation. With our new predicate generation technique, we can
improve the effectiveness and efficiency of the existing
learning-based loop invariant inference
technique~\cite{VMCAI10}. Constructing the set of atomic predicates is
fully automatic and on-demand.


\noindent

\subsection{Example}
Consider the following annotated loop:

\begin{align*}
  &\mathtt{\{}\ n \geq 0 \wedge x = n \wedge y = n\ \mathtt{\}}\  \\
  &\mathtt{while}\ x > 0\ \mathtt{do}\ \\
  &\quad x = x - 1;\ y = y - 1\ \\
  &\mathtt{done}\ \\
  &\mathtt{\{}\ x + y = 0\ \mathtt{\}}
\end{align*}
\\
Assume that variables $x$ and $y$ both have the value $n \geq 0$
before entering the loop. The loop body decreases each variable by one
until the variable $x$ becomes zero. We want to show that $x + y$ is
zero after executing the loop. This requires of us to establish the
fact that variables $x$ and $y$ have the same value during iterations
and eventually become zero after exiting the loop. To express this
fact as a loop invariant, we require a predicate $x = y$. The program
text however does not reveal this equality explicitly. Moreover,
atomic predicates from the program text cannot express any loop
invariant that establishes the given specification. Using atomic
predicates in the program text is not sufficient in this
case. However, we can exploit the fact that any loop invariant $\iota$
should be weaker than the pre-condition $\delta$ and stronger than the
disjunction of the loop guard $\kappa$ and the post-condition
$\epsilon$ ($\delta \timplies \iota \timplies \kappa \vee
\epsilon$). Then, we can gen an interpolant from inconsistent formula
$\delta \wedge \neg (\kappa \vee \epsilon)$ and extract atomic
predicates in it. From the interpolant of $(n \geq 0 \wedge x = n
\wedge y = n) \wedge \neg (x > 0 \vee x + y = 0)$, we obtain two
atomic predicates $x = y$ and $2y \geq 0$. Observe that the
interpolation is able to synthesize the necessary predicate $x =
y$. In fact, loop invariant $x = y \wedge x \ge 0$ establishes the
specification of the loop.

\noindent

\subsection{Related Work}
Jung \etal~\cite{VMCAI10} introduce the loop invariant inference
technique based on algorithmic learning. Kong \etal~\cite{APLAS10}
extend this technique to quantified loop invariant inference. Both
algorithms require users to provide atomic predicates. The present
work addresses this problem for the case of quantifier-free loop
invariants.

Recently, Lee \etal~\cite{LWY:12:TAAL} introduce learning-based
technique for termination analysis. The technique infers the
transition invariant of a given loop as a proof of termination, by
combining algorithmic learning and decision procedures. In the paper,
the authors design a heuristic to generate atomic transition
predicates. It is an interesting future work to adapt our technique in
the present paper for transition invariant inference.

Many interpolation algorithms and their implementations are
available~\cite{csisat,princess,mathsat4,McMillan:05:ITP}.
Interpolation-based techniques for predicate refinement in software
model checking are proposed
in~\cite{EsparzaKS06,POPL04,Jhala06,Jhala07,McMillan06}. Abstract
models used in these techniques however may require excessive
invocations to theorem provers. Another interpolation-based technique
for first-order invariants is developed in~\cite{McMillan}. The
paramodulation-based technique presented in the paper does not
construct abstract models as our approach. It however only generates
invariants in first-order logic with equality.
A template-based predicate generation technique for quantified
invariants is proposed~\cite{PLDI09}. The technique reduces the
invariant inference problem to constraint programming and generates
predicates in user-provided templates.




\subsection{Paper Organization}
Section~\ref{section:preliminaries} gives preliminaries for the
presentation. Section~\ref{section:inferring-unquantified-loop-invariants}
reviews the learning-based loop invariant inference
framework~\cite{VMCAI10}. Section~\ref{section:predicate-generation-by-interpolation}
presents our interpolation-based predicate generation technique.
Section~\ref{section:algorithm} presents the loop invariant inference
algorithms with automatic predicate
generation. Section~\ref{section:experimental-results} presents and
discusses our experimental results.  Section~\ref{section:conclusions}
concludes this work.

\section{Preliminaries}
\label{section:preliminaries}

\subsection{Quantifier-free Formulae}
\label{subsec:qf}

Let $\Prop$ denote the quantifier-free logic with equality, linear
inequality, and uninterpreted
functions. Define the 
\emph{domain} $\mathbb{D} = \mathbb{Q} \cup \mathbb{B}$ where
$\mathbb{Q}$ is the set of rational numbers and $\mathbb{B} = \{
\FF, \TT \}$ is the Boolean domain. Fix a set $\ox$ of variables.
A \emph{valuation} over $\ox$ is a function from
$\ox$ to  
$\mathbb{D}$. The class of valuations over $\ox$ is denoted by $\Val{\ox}$.
For any formula $\theta \in \Prop$ and valuation $\nu$ over free
variables in $\theta$, $\theta$ is
\emph{satisfied} by $\nu$ (written $\nu \models \theta$) if $\theta$
evaluates to $\TT$ under $\nu$; $\theta$ is \emph{inconsistent} if
$\theta$ is not satisfied by any valuation.
Given a formula $\theta \in \Prop$, a \emph{satisfiability modulo
theories (SMT) solver} returns a satisfying
valuation $\nu$ of $\theta$ if $\theta$ is not
inconsistent~\cite{mathsat4,Yices}.

\subsection{Interpolation Theorem}
\label{subsec:interpolation}
For $\theta \in \Prop$, we denote the set of non-logical symbols
occurred in $\theta$ by $\symbols{\theta}$. Let $\Theta = [\theta_1, 
\ldots, \theta_m]$ be a sequence with $\theta_i \in \Prop$ for $1 \leq i
\leq m$. The sequence $\Theta$ is \emph{inconsistent} if 
$\theta_1 \wedge \theta_2 \wedge \cdots \wedge \theta_m$ is inconsistent.
The sequence $\Lambda = [\lambda_0, \lambda_1, \ldots,
\lambda_m]$ of quantifier-free formulae is an \emph{inductive
  interpolant} of $\Theta$ if 
\begin{iteMize}{$\bullet$}
\item $\lambda_0 = \TT$ and $\lambda_m = \FF$;
\item for all $1 \leq i \leq m$, $\lambda_{i-1} \wedge \theta_i
  \Rightarrow \lambda_i$; and
\item for all $1 \leq i < m$, $\symbols{\lambda_i} \subseteq
  \symbols{\theta_i} \cap \symbols{\theta_{i+1}}$.
\end{iteMize}
The third condition of interpolants makes them attractive to use for
predicate generation; since the set of symbols in an interpolant
should be an intersection of sets of symbols in two inconsistent
formulae, it sometimes consists of predicates which do not appear in
the two. The interpolation theorem states that an inductive
interpolant exists for any inconsistent
sequence~\cite{craig,McMillan:05:ITP,McMillan06}. Some of existing
theorem provers~\cite{csisat,princess,mathsat4,McMillan:05:ITP}
can generate interpolants from inconsistent sequences.

\subsection{Predicate Abstraction}
\label{subsec:pred-abs}

\begin{figure*}[b]
  \centering
  \resizebox{0.4\textwidth}{!}{\input 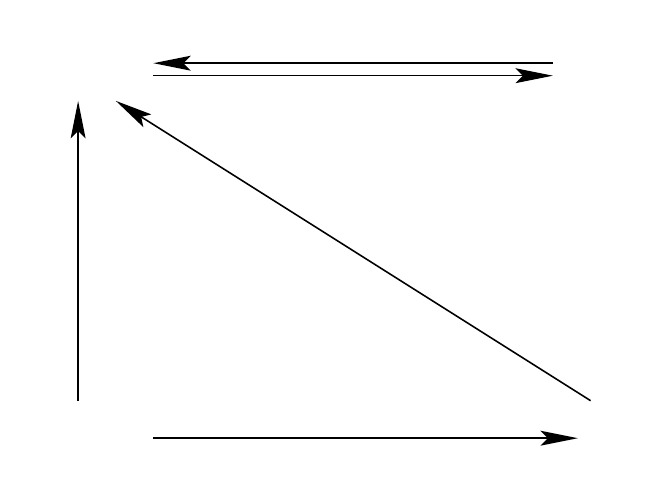_t}
  \caption{Relating $\Prop$ and $\Bool[\obp]$}
  \label{figure:domains}
\end{figure*}

Let $\Prop[P]$ denote the set of quantifier-free formulae over the set
$P$ of atomic predicates. A cube over $P$ is a conjunction $p_1 \land
\cdots \land p_k \land \neg p_{k+1} \land \cdots \land \neg p_{k+k'}$
where all $p_j \in P$ are distinct. We say that $k + k'$ is the size
of the cube. A minterm over $P$ is a cube whose size is $|P|$.

Consider the set $\Bool[\obp]$ of Boolean formulae over the set $\obp$
of Boolean variables where $\obp \defn \{ b_p : p \in P \}$. An
\emph{abstract valuation} is a function from $\obp$ to
$\mathbb{B}$. We write $\Val{\obp}$ for the set of abstract
valuations. A Boolean formula in $\Bool[\obp]$ is a \emph{canonical
  monomial} if it is a conjunction of literals, where each Boolean
variable in $\obp$ occurs exactly once. The following
functions~\cite{VMCAI10,JLWY:11:PGLBQFLII} relate formulae in
$\Prop[P]$ and $\Bool[\obp]$ (Figure~\ref{figure:domains}):

\begin{equation*}
  \begin{array}{rcl}
    \gamma (\beta) & \defn & \beta[\obp \mapsto P]\\
    \alpha (\theta) & \defn & 
    \bigvee \{ \beta \in \Bool[\obp] :
    \beta \mbox{ is a canonical monomial and }
    \theta \wedge \gamma (\beta) \mbox{ is satisfiable} \}\\
    \gamma^* (\mu) & \defn &
    \bigwedge\limits_{\mu (b_p) = \TT}
    \{ p \} \wedge
    \bigwedge\limits_{\mu (b_p) = \FF}
    \{ \neg p \}\\
    \alpha^* (\nu) & \defn & \mu \mbox{ where }
    \mu (b_p) =
    \left\{
      \begin{array}{ll}
        \TT & \mbox{ if } \nu \models p\\
        \FF & \mbox{ if } \nu \not\models p
      \end{array}
    \right. \\
    \Gamma (\nu) & \defn & 
    \bigwedge\limits_{x \in \ox} x = \nu (x)
  \end{array}
\end{equation*}
The abstraction function $\alpha$ maps any quantifier-free formula to
a Boolean formula in $\Bool[\obp]$, whereas the concretization
function $\gamma$ maps any Boolean formula in $\Bool[\obp]$ to a
quantifier-free formula in $\Prop[P]$. Moreover, the function
$\alpha^*$ maps a valuation over $\ox$ to a valuation over $\obp$; the
function $\gamma^*$ maps a valuation over $\obp$ to a quantifier-free
formula in $\Prop[P]$. The function $\Gamma (\nu)$ specifies the
valuation $\nu$ in $\Prop$. Observe that quantifier-free formula
$\gamma(\beta)$ is a minterm when Boolean formula $\beta$ is a
canonical monomial. Observe also that formula $\gamma(\alpha(\theta))$
is in disjunctive normal form and equivalent to $\theta \in \Prop[P]$.

Consider, for instance, $P = \{ n \geq 0, x = n, y =
n \}$ and $\obp = \{ b_{n \geq 0}, b_{x = n}, b_{y = n} \}$. We have
$\gamma (b_{n \geq 0} \wedge \neg b_{x = n}) = n \geq 0 \wedge \neg (x
= n)$ and 
\begin{equation*}
\alpha (\neg (x = y)) = 
\begin{array}{l}
(b_{n \geq 0} \wedge b_{x = n} \wedge \neg b_{y = n}) \vee 
(b_{n \geq 0} \wedge \neg b_{x = n} \wedge b_{y = n}) \vee\\
(b_{n \geq 0} \wedge \neg b_{x = n} \wedge \neg b_{y = n}) \vee
(\neg b_{n \geq 0} \wedge b_{x = n} \wedge \neg b_{y = n}) \vee\\
(\neg b_{n \geq 0} \wedge \neg b_{x = n} \wedge b_{y = n}) \vee
(\neg b_{n \geq 0} \wedge \neg b_{x = n} \wedge \neg b_{y = n}).
\end{array}
\end{equation*}
Moreover, $\alpha^* (\nu)(b_{n \geq 0}) = \alpha^* (\nu) (b_{x = n}) =
\alpha^* (\nu) (b_{y = n}) = \TT$ when $\nu (n) = \nu (x) = \nu (y) =
1$. And $\gamma^* (\mu) = n \geq 0 \wedge x = n \wedge \neg (y = n)$
when $\mu (b_{n \geq 0}) = \mu (b_{x = n}) = \TT$ but $\mu (b_{y = n})
= \FF$.

The following lemmas prove useful properties of these abstraction and
concretization functions. 

\begin{lem}
  \label{lemma:exact}
  Let $P$ be a set of atomic predicates,
  $\theta \in \Prop[P]$, and $\beta$ a canonical monomial in
  $\Bool[B_P]$. Then $\theta \wedge \gamma (\beta)$ is
  satisfiable if and only if $\gamma (\beta) \timplies
  \theta$.
\end{lem}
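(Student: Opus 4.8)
The plan is to exploit the fact, already recorded in the text, that $\gamma(\beta)$ is a \emph{minterm} over $P$: it is a conjunction containing, for every $p \in P$, exactly one of $p$ or $\neg p$. Hence a model of $\gamma(\beta)$ has its truth value on every atomic predicate in $P$ completely pinned down, and since $\theta \in \Prop[P]$ is a Boolean combination of predicates drawn from $P$, the truth value of $\theta$ under any valuation satisfying $\gamma(\beta)$ is thereby determined. I would prove the two implications separately, the first being the substantive one.

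For the forward direction, suppose $\nu \models \theta \wedge \gamma(\beta)$ and let $\nu'$ be an arbitrary valuation with $\nu' \models \gamma(\beta)$; the goal is $\nu' \models \theta$. The key observation is that, because $\gamma(\beta)$ is a minterm, for each $p \in P$ we have $\nu \models p$ if and only if the literal $p$ (rather than $\neg p$) occurs in $\gamma(\beta)$, and likewise for $\nu'$; hence $\nu$ and $\nu'$ agree on the truth value of every $p \in P$. A routine structural induction on the Boolean structure of $\theta$ (all of whose atoms lie in $P$) then gives $\nu \models \theta$ iff $\nu' \models \theta$, so from $\nu \models \theta$ we conclude $\nu' \models \theta$. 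Since $\nu'$ was arbitrary, $\gamma(\beta) \timplies \theta$.

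For the backward direction, assume $\gamma(\beta) \timplies \theta$. Taking any valuation $\nu$ with $\nu \models \gamma(\beta)$, the implication gives $\nu \models \theta$, hence $\nu \models \theta \wedge \gamma(\beta)$, so $\theta \wedge \gamma(\beta)$ is satisfiable. The one point that needs care here --- and the main obstacle --- is the existence of such a $\nu$, i.e. that $\gamma(\beta)$ is itself consistent; an arbitrary minterm over $P$ (e.g. $\neg(x \ge 0) \wedge (x \ge 1)$ for $P = \{x \ge 0,\, x \ge 1\}$) need not be. I would handle this by restricting attention to consistent canonical monomials, which is exactly the situation relevant to the rest of the development, since the monomials that actually arise come from $\alpha$ and are satisfiable by construction; under that hypothesis the two directions combine to yield the stated equivalence. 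The remaining work is only the bookkeeping of the structural induction and is entirely routine.
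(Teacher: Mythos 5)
Your proof is correct and reaches the stated equivalence, but by a genuinely different route from the paper's. For the forward direction the paper converts $\theta$ to disjunctive normal form, picks a disjunct (a cube) $\theta_i$ with $\theta_i \wedge \gamma(\beta)$ satisfiable, and argues syntactically that every literal of $\theta_i$ must already occur among the conjuncts of the minterm $\gamma(\beta)$ (otherwise the conjunction would contain a complementary pair of literals), so that $\gamma(\beta) \timplies \theta_i \timplies \theta$. You argue semantically instead: any two models of the minterm $\gamma(\beta)$ agree on the truth value of every $p \in P$, and a structural induction on $\theta \in \Prop[P]$ transfers satisfaction from the witnessing valuation to an arbitrary model of $\gamma(\beta)$. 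Both are sound; yours avoids the DNF normalization at the cost of an (admittedly routine) induction. Your caveat about the backward direction is well taken and is in fact a point the paper glosses over: it declares that direction ``trivial,'' but it silently requires $\gamma(\beta)$ to be satisfiable --- for an inconsistent minterm such as your $\neg(x \ge 0) \wedge (x \ge 1)$ the implication holds vacuously while the conjunction is unsatisfiable, so the equivalence as literally stated fails. This does no harm downstream, since Lemmas~\ref{lemma:monotonic-abstraction}--\ref{lemma:abstraction-correspondence} invoke only the forward direction or apply the lemma to disjuncts of $\alpha(\theta)$, whose concretizations are satisfiable together with $\theta$ by the very definition of $\alpha$; but your explicit restriction to consistent canonical monomials is the honest way to state the backward half.
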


\begin{proof}
  Let $\theta' = \bigvee\limits_i \theta_i \in \Prop[P]$ be a formula
  in disjunctive normal form such that $\theta' \Leftrightarrow
  \theta$. Note that each $\theta_i$ is a cube over set $P$. Let
  $\mathit{Lit}(\theta)$ be a set of literals in formula
  $\theta$. Then, $\mathit{Lit}(\theta_i) \subseteq P \cup \set{\neg p
    : p \in P}$.

  Assume $\theta \wedge \gamma (\beta)$ is satisfiable. Then $\theta'
  \wedge \gamma (\beta)$ is satisfiable and $\theta_i \wedge \gamma
  (\beta)$ is satisfiable for some $i$. Since $\beta$ is canonical
  monomial, $\gamma(\beta)$ is a minterm over set $P$ and
  $\mathit{Lit}(\theta) \subseteq \mathit{Lit}(\gamma(\beta))$. Hence
  $\theta_i \wedge \gamma (\beta)$ is satisfiable implies $\gamma
  (\beta) \timplies \theta_i$. We have $\gamma (\beta) \timplies
  \theta$.

  The other direction is trivial.
\end{proof}

\begin{lem}
  \label{lemma:monotonic-abstraction}
  Let $P$ be a set of atomic predicates, $\theta, \rho \in
  \Prop[P]$. Then
  \begin{equation*}
    \theta \timplies \rho
    \mbox{ implies }
    \alpha (\theta) \timplies \alpha (\rho).
  \end{equation*}
\end{lem}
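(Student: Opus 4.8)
The plan is to exploit the fact that, by definition, $\alpha(\theta)$ is a disjunction of canonical monomials, namely exactly those canonical monomials $\beta \in \Bool[\obp]$ for which $\theta \wedge \gamma(\beta)$ is satisfiable. Hence an implication of the form $\alpha(\theta) \timplies \alpha(\rho)$ will follow at once if I can show that the set of canonical monomials contributing to $\alpha(\theta)$ is contained in the set contributing to $\alpha(\rho)$. So the reduction I would carry out first is: fix a canonical monomial $\beta$ that is a disjunct of $\alpha(\theta)$, i.e. with $\theta \wedge \gamma(\beta)$ satisfiable, and aim to prove that $\rho \wedge \gamma(\beta)$ is satisfiable, so that $\beta$ is also a disjunct of $\alpha(\rho)$.

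The second step is the core, and it is short. Take a valuation $\nu$ with $\nu \models \theta \wedge \gamma(\beta)$. From $\nu \models \theta$ and the hypothesis $\theta \timplies \rho$ we get $\nu \models \rho$, and together with $\nu \models \gamma(\beta)$ this gives $\nu \models \rho \wedge \gamma(\beta)$; hence $\rho \wedge \gamma(\beta)$ is satisfiable, as desired. (Alternatively one can route this through Lemma~\ref{lemma:exact}: satisfiability of $\theta \wedge \gamma(\beta)$ gives $\gamma(\beta) \timplies \theta$, hence $\gamma(\beta) \timplies \rho$; since $\gamma(\beta)$ is then satisfiable, $\rho \wedge \gamma(\beta)$ is satisfiable. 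But the direct valuation argument is cleaner and needs nothing but the definitions.)

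Finally I would close with the elementary monotonicity of disjunction: if $\Phi = \bigvee_{\beta \in S}\beta$ and $\Psi = \bigvee_{\beta \in T}\beta$ with $S \subseteq T$, then any abstract valuation over $\obp$ satisfying $\Phi$ satisfies some $\beta \in S \subseteq T$ and hence satisfies $\Psi$, so $\Phi \timplies \Psi$. Instantiating $S = \{\beta : \theta \wedge \gamma(\beta)\text{ satisfiable}\}$ and $T = \{\beta : \rho \wedge \gamma(\beta)\text{ satisfiable}\}$, the inclusion $S \subseteq T$ established in the previous step yields $\alpha(\theta) \timplies \alpha(\rho)$.

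I do not expect any genuine obstacle: the statement is a direct consequence of the definition of $\alpha$ together with the trivial direction of Lemma~\ref{lemma:exact} (or no lemma at all). The only points worth a word of care are the degenerate cases — when $S = \emptyset$ so that $\alpha(\theta)$ is the empty disjunction $\FF$, in which case $\alpha(\theta)\timplies\alpha(\rho)$ holds vacuously, and when $\gamma(\beta)$ is itself unsatisfiable, in which case $\beta$ contributes to neither side and causes no difficulty.
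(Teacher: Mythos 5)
Your proposal is correct and takes essentially the same approach as the paper: both reduce the claim to showing that every canonical monomial $\beta$ contributing a disjunct to $\alpha(\theta)$ also contributes one to $\alpha(\rho)$. The paper establishes satisfiability of $\rho \wedge \gamma(\beta)$ via Lemma~\ref{lemma:exact} exactly as in your parenthetical alternative; your primary direct valuation argument is a harmless, slightly more elementary variant of that same step.
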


\begin{proof}
  Let $\alpha (\theta) = \bigvee\limits_i \beta_i$ where $\beta_i$ is
  a canonical monomial and $\theta \wedge \gamma (\beta_i)$ is
  satisfiable. By Lemma~\ref{lemma:exact}, 
  $\gamma (\beta_i) \timplies \theta$. Hence $\gamma (\beta_i)
  \timplies \rho$ and $\rho \wedge \gamma (\beta_i)$ is
  satisfiable.
\end{proof}

\begin{lem}
  \label{lemma:equivalent-concretization}
  Let $P$ be a set of atomic propositions and $\theta \in
  \Prop[P]$. Then $\theta \Leftrightarrow \gamma (\alpha
  (\theta))$.
\end{lem}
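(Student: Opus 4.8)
Proof proposal for Lemma~\ref{lemma:equivalent-concretization} ($\theta \Leftrightarrow \gamma(\alpha(\theta))$).

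The plan is to unfold the definitions of $\alpha$ and $\gamma$ and show the two implications separately, leaning on Lemma~\ref{lemma:exact} for the bookkeeping. By definition $\alpha(\theta) = \bigvee\{\beta : \beta \text{ a canonical monomial and } \theta \wedge \gamma(\beta) \text{ satisfiable}\}$, so applying $\gamma$ (which is just the substitution $b_p \mapsto p$, hence distributes over $\vee$) gives $\gamma(\alpha(\theta)) = \bigvee\{\gamma(\beta) : \beta \text{ a canonical monomial and } \theta \wedge \gamma(\beta) \text{ satisfiable}\}$, a disjunction of minterms over $P$.

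For the direction $\gamma(\alpha(\theta)) \timplies \theta$: take any disjunct $\gamma(\beta)$ appearing in $\gamma(\alpha(\theta))$. By construction $\theta \wedge \gamma(\beta)$ is satisfiable, and since $\theta \in \Prop[P]$ and $\beta$ is a canonical monomial, Lemma~\ref{lemma:exact} yields $\gamma(\beta) \timplies \theta$. As this holds for every disjunct, the whole disjunction implies $\theta$.

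For the direction $\theta \timplies \gamma(\alpha(\theta))$: let $\nu$ be any valuation with $\nu \models \theta$. Form the abstract valuation $\mu = \alpha^*(\nu)$ and let $\beta$ be the canonical monomial $\bigwedge_{\mu(b_p)=\TT} b_p \wedge \bigwedge_{\mu(b_p)=\FF} \neg b_p$; then $\gamma(\beta) = \gamma^*(\mu)$, and by the definition of $\alpha^*$ we have $\nu \models \gamma(\beta)$. Hence $\nu \models \theta \wedge \gamma(\beta)$, so $\theta \wedge \gamma(\beta)$ is satisfiable, meaning $\beta$ is one of the disjuncts defining $\alpha(\theta)$; therefore $\gamma(\beta)$ is a disjunct of $\gamma(\alpha(\theta))$ and $\nu \models \gamma(\alpha(\theta))$.

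The only mild subtlety is making sure $\gamma$ commutes with disjunction and that the "canonical monomial $\leftrightarrow$ minterm" correspondence lines up with the hypotheses of Lemma~\ref{lemma:exact} (the paper already notes $\gamma(\beta)$ is a minterm exactly when $\beta$ is a canonical monomial), so no real obstacle is expected; the argument is essentially just chasing the definitions through the two inclusions.
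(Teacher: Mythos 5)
Your proof is correct and follows essentially the same route as the paper's: one direction is exactly the paper's appeal to Lemma~\ref{lemma:exact} applied to each satisfiable disjunct, and the other uses the same canonical monomial $\chi(\alpha^*(\nu))$ construction (the paper writes it as $\chi(\mu)$) to exhibit $\nu$ as a model of some disjunct of $\gamma(\alpha(\theta))$. The only difference is cosmetic: the paper begins by passing to a DNF of $\theta$, a step it never actually uses, which you rightly omit.
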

\begin{proof}
  Let $\theta' = \bigwedge\limits_i \theta_i$ be a quantified-free
  formula in disjunctive normal form such that $\theta'
  \Leftrightarrow \theta$. Let $\mu \in \Bool[B_P]$. Define
  \begin{equation*}
    \chi (\mu) = \bigwedge (\{ b_p : \mu (b_p) = \TT \} \cup
                           \{ \neg b_p : \mu (b_p) = \FF \}).
  \end{equation*}
  Note that $\chi (\mu)$ is a canonical monomial and $\mu \models \chi
  (\mu)$.

  Assume $\nu \models \theta$. Then $\nu \models \theta_i$ for some
  $i$. Consider the canonical monomial $\chi (\alpha^* (\nu))$. Note
  that $\nu \models \gamma (\chi (\alpha^* (\nu)))$. Thus $\chi
  (\alpha^* (\nu))$ is a disjunct in $\alpha (\theta)$. We have $\nu
  \models \gamma (\alpha (\theta))$.

  Conversely, assume $\nu \models \gamma (\alpha (\theta))$. Then $\nu
  \models \gamma (\beta)$ for some canonical monomial $\beta$ and
  $\gamma (\beta) \wedge \theta$ is satisfiable. By
  Lemma~\ref{lemma:exact}, $\gamma (\beta) \timplies \theta$. Hence
  $\nu \models \theta$.
\end{proof}

\begin{lem}
  \label{lemma:abstraction-correspondence}
  Let $P$ be a set of atomic propositions, $\theta \in
  \Prop[P]$, $\beta \in \Bool[B_P]$, and $\nu$ a
  valuation for $X$. Then
  \begin{enumerate}[\em(1)]
  \item $\nu \models \theta$ if and only if 
    $\alpha^* (\nu) \models \alpha(\theta)$; and
  \item $\nu \models \gamma (\beta)$ if and only if
    $\alpha^* (\nu) \models \beta$.
  \end{enumerate}
  \begin{equation*}
  \end{equation*}
\end{lem}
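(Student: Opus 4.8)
The plan is to prove both biconditionals by unfolding the definitions of $\alpha^*$, $\alpha$, and $\gamma$, and then reusing the earlier lemmas to handle the nontrivial implications. For part~(2), I would argue directly from the definition of $\gamma^*$ and $\alpha^*$: write $\gamma(\beta) = \beta[\obp \mapsto P]$, so $\nu \models \gamma(\beta)$ means that $\nu$ satisfies the quantifier-free formula obtained by replacing each $b_p$ with $p$. Since satisfaction of a Boolean combination is determined by the truth values of its atoms, $\nu \models \beta[\obp \mapsto P]$ iff the Boolean valuation $b_p \mapsto (\TT \text{ if } \nu \models p, \ \FF \text{ otherwise})$ satisfies $\beta$. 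But that Boolean valuation is exactly $\alpha^*(\nu)$ by definition, so $\nu \models \gamma(\beta)$ iff $\alpha^*(\nu) \models \beta$. This should be a short structural-induction argument on the shape of $\beta$ (or simply an appeal to the substitution lemma for first-order satisfaction).

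For part~(1), I would use part~(2) together with Lemma~\ref{lemma:equivalent-concretization}. The forward direction: assume $\nu \models \theta$. By Lemma~\ref{lemma:equivalent-concretization}, $\nu \models \gamma(\alpha(\theta))$, so $\nu \models \gamma(\beta)$ for some canonical monomial disjunct $\beta$ of $\alpha(\theta)$. By part~(2), $\alpha^*(\nu) \models \beta$, and since $\beta$ is a disjunct of $\alpha(\theta)$ we get $\alpha^*(\nu) \models \alpha(\theta)$. The backward direction: assume $\alpha^*(\nu) \models \alpha(\theta)$. Since $\alpha(\theta)$ is a disjunction of canonical monomials, $\alpha^*(\nu) \models \beta$ for one such disjunct $\beta$, with $\theta \wedge \gamma(\beta)$ satisfiable. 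By part~(2), $\nu \models \gamma(\beta)$; by Lemma~\ref{lemma:exact}, $\theta \wedge \gamma(\beta)$ satisfiable implies $\gamma(\beta) \timplies \theta$; hence $\nu \models \theta$.

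The one point that needs a little care — and the closest thing to an obstacle here — is being precise about what $\alpha^*(\nu) \models \beta$ means when $\beta$ is a \emph{disjunction} of canonical monomials rather than a single monomial, and making sure the canonical monomial witnessing membership in $\alpha(\theta)$ is genuinely a disjunct and not just implied by it. This is handled by the observation, already noted in the text, that $\gamma(\alpha(\theta))$ is literally in disjunctive normal form as a disjunction over the canonical monomials $\beta$ with $\theta \wedge \gamma(\beta)$ satisfiable, so "disjunct of" is exact. Everything else is a routine unwinding of the substitution $[\obp \mapsto P]$, so I would keep the write-up to a few lines, proving (2) first and then deriving (1) from (2), Lemma~\ref{lemma:exact}, and Lemma~\ref{lemma:equivalent-concretization}.
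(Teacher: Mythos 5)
Your proof is correct, but it inverts the paper's order of dependencies. The paper proves part~(1) first and directly: for the forward direction it constructs the explicit witness monomial $\chi(\alpha^*(\nu))$ (the same device used in the proof of Lemma~\ref{lemma:equivalent-concretization}), observes $\nu \models \gamma(\chi(\alpha^*(\nu)))$, and concludes that $\chi(\alpha^*(\nu))$ is a disjunct of $\alpha(\theta)$; the backward direction is essentially the same as yours (pick the satisfied disjunct, apply Lemma~\ref{lemma:exact}). The paper then derives part~(2) \emph{from} part~(1), via the identity $\beta = \alpha(\gamma(\beta))$. You instead prove part~(2) first, as a bare substitution lemma ($\nu \models \beta[\obp \mapsto P]$ iff $\alpha^*(\nu) \models \beta$, by induction on $\beta$), and then obtain part~(1) from part~(2) together with Lemma~\ref{lemma:equivalent-concretization} and Lemma~\ref{lemma:exact}. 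Your route is arguably the more robust of the two: the substitution argument gives both directions of (2) for arbitrary $\beta$ at once, whereas the paper's appeal to $\beta = \alpha(\gamma(\beta))$ holds only up to logical equivalence and needs a word of justification when the predicates in $P$ are not logically independent (e.g.\ $\alpha(\gamma(\beta))$ drops the monomials whose concretizations are unsatisfiable). The paper's route, in exchange, keeps (1) self-contained and reuses the $\chi$ machinery already set up for Lemma~\ref{lemma:equivalent-concretization}. Your one flagged worry --- whether the witnessing canonical monomial is genuinely a disjunct of $\alpha(\theta)$ --- is resolved exactly as you say, since $\alpha(\theta)$ is by definition the disjunction of those canonical monomials $\beta$ with $\theta \wedge \gamma(\beta)$ satisfiable and $\gamma$ commutes with disjunction.
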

{
\begin{proof}\hfill
  \begin{enumerate}[(1)]
  \item Assume $\nu \models \theta$. $\chi (\alpha^* (\nu))$ is a
    canonical monomial. Observe that $\nu \models \gamma (\chi
    (\alpha^* (\nu)))$. Hence $\gamma (\chi (\alpha^* (\nu))) \wedge
    \theta$ is satisfiable. By the definition of $\alpha (\theta)$ and
    $\chi (\alpha^* (\nu))$ is canonical, $\chi (\alpha^* (\nu))
    \Rightarrow \alpha (\theta)$. $\alpha^* (\nu) \models \alpha
    (\theta)$ follows from $\alpha^* (\nu) \models \chi (\alpha^*
    (\nu))$.

    Conversely, assume $\alpha^* (\nu) \models \alpha (\theta)$. Then
    $\alpha^* (\nu) \models \beta$ where $\beta$ is a canonical
    monomial and $\gamma (\beta) \wedge \theta$ is satisfiable. By the
    definition of $\alpha^* (\nu)$, $\nu \models \gamma
    (\beta)$. Moreover, $\gamma (\beta) \Rightarrow \theta$ by
    Lemma~\ref{lemma:exact}. Hence $\nu \models \theta$.
  \item Assume $\nu \models \gamma (\beta)$. By
    Lemma~\ref{lemma:abstraction-correspondence}~1, $\alpha^* (\nu)
    \models \alpha (\gamma (\beta))$. Note that $\beta = \alpha
    (\gamma (\beta))$. Thus $\alpha^* (\nu) \models \beta$.
  \end{enumerate}
\end{proof}
}

\begin{lem}
  \label{lemma:concretization}
  Let $P$ be a set of atomic propositions, $\theta \in \Prop[P]$, and
  $\mu$ a Boolean valuation for $B_P$. Then $\gamma^* (\mu) \timplies
  \theta$ if and only if $\mu \models \alpha(\theta)$.
\end{lem}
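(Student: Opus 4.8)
The plan is to reduce the statement directly to Lemma~\ref{lemma:exact} by exhibiting, for the Boolean valuation $\mu$, the unique canonical monomial it satisfies and relating that monomial's concretization to $\gamma^*(\mu)$. Concretely I would reuse the canonical monomial
$\chi(\mu) = \bigwedge(\{b_p : \mu(b_p) = \TT\} \cup \{\neg b_p : \mu(b_p) = \FF\})$
introduced in the proof of Lemma~\ref{lemma:equivalent-concretization}. Two elementary observations drive everything. First, unwinding the definitions of $\gamma$ and $\gamma^*$, applying the substitution $\obp \mapsto P$ to $\chi(\mu)$ produces exactly $\gamma^*(\mu)$, so $\gamma(\chi(\mu)) = \gamma^*(\mu)$. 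Second, a Boolean valuation $\mu$ satisfies a canonical monomial if and only if that monomial is $\chi(\mu)$, since a canonical monomial fixes the truth value of every variable in $\obp$; consequently, since $\alpha(\theta)$ is a disjunction of canonical monomials, $\mu \models \alpha(\theta)$ holds precisely when $\chi(\mu)$ occurs as one of those disjuncts, i.e. precisely when $\theta \wedge \gamma(\chi(\mu))$ is satisfiable.

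For the ($\Leftarrow$) direction I would argue as follows. Suppose $\mu \models \alpha(\theta)$. By the second observation, $\chi(\mu)$ is one of the disjuncts of $\alpha(\theta)$, so by the definition of $\alpha$ the formula $\theta \wedge \gamma(\chi(\mu))$ is satisfiable. Rewriting $\gamma(\chi(\mu))$ as $\gamma^*(\mu)$ and invoking Lemma~\ref{lemma:exact} with the canonical monomial $\chi(\mu)$ yields $\gamma^*(\mu) \timplies \theta$.

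For the ($\timplies$) direction, suppose $\gamma^*(\mu) \timplies \theta$, that is $\gamma(\chi(\mu)) \timplies \theta$. Applying Lemma~\ref{lemma:exact} in the other direction gives that $\theta \wedge \gamma(\chi(\mu))$ is satisfiable, so $\chi(\mu)$ meets the membership condition in the definition of $\alpha(\theta)$ and hence appears as a disjunct of $\alpha(\theta)$. Since $\mu \models \chi(\mu)$, we conclude $\mu \models \alpha(\theta)$.

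I do not expect a genuine obstacle here: the entire content lies in matching definitions, and the only point requiring care is the bookkeeping identity $\gamma(\chi(\mu)) = \gamma^*(\mu)$ together with the fact that $\mu$ satisfies exactly one canonical monomial. Any degenerate behaviour — for instance when the minterm $\gamma^*(\mu)$ is itself unsatisfiable, so that $\gamma^*(\mu) \timplies \theta$ holds vacuously — is already absorbed into Lemma~\ref{lemma:exact}, on which this argument rests, so nothing new arises.
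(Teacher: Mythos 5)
Your proof is correct, but it takes a genuinely more direct route than the paper's. You reduce both directions to Lemma~\ref{lemma:exact} via two facts: the identity $\gamma(\chi(\mu)) = \gamma^*(\mu)$ and the observation that a Boolean valuation satisfies exactly one canonical monomial, so that $\mu \models \alpha(\theta)$ holds precisely when $\chi(\mu)$ is one of the disjuncts of $\alpha(\theta)$, i.e.\ precisely when $\theta \wedge \gamma(\chi(\mu))$ is satisfiable. The paper instead routes the forward direction through Lemma~\ref{lemma:monotonic-abstraction} and Lemma~\ref{lemma:equivalent-concretization} (lifting $\gamma^*(\mu) \timplies \theta$ to $\alpha(\gamma^*(\mu)) \timplies \alpha(\theta)$ and then pulling $\chi(\mu)$ through), and the backward direction through both parts of Lemma~\ref{lemma:abstraction-correspondence} (taking an arbitrary $\nu \models \gamma^*(\mu)$ and transporting it to the abstract side and back). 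Your argument is shorter and exposes the combinatorial content more plainly; the paper's version has the merit of exercising the intermediate lemmas it has already established, so each step is a citation rather than a definition-unwinding. One point worth being explicit about: the direction of Lemma~\ref{lemma:exact} you invoke in your ($\timplies$) step --- from $\gamma(\beta) \timplies \theta$ to satisfiability of $\theta \wedge \gamma(\beta)$ --- genuinely requires $\gamma(\beta)$ to be consistent (the paper dismisses it as ``trivial'' and its statement is strictly false for an inconsistent minterm, e.g.\ $P = \{x>0,\, x<0\}$ with both variables true). You flag this degeneracy yourself and correctly note that it is inherited from Lemma~\ref{lemma:exact}; the paper's own proof of the present lemma has exactly the same implicit consistency assumption, and the surrounding algorithms treat inconsistent $\gamma^*(\mu)$ as a separate case, so this is not a defect of your argument relative to the paper's.
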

{
\begin{proof}
  Assume $\gamma^* (\mu) \timplies \theta$. By
  Lemma~\ref{lemma:monotonic-abstraction}, $\alpha (\gamma^* (\mu))
  \timplies \alpha (\theta)$. Note that $\gamma^* (\mu) = \gamma
  (\chi (\mu))$. By Lemma~\ref{lemma:equivalent-concretization}, $\chi
  (\mu) \timplies \alpha (\theta)$. Since $\mu \models \chi (\mu)$,
  we have $\mu \models \alpha (\theta)$.

  Conversely, assume $\mu \models \alpha (\theta)$. We have $\chi
  (\mu) \timplies \alpha (\theta)$ by the definition of $\chi
  (\mu)$. Let $\nu \models \gamma^* (\mu)$, that is, $\nu \models
  \gamma (\chi (\mu))$. By
  Lemma~\ref{lemma:abstraction-correspondence}~(2), $\alpha^* (\nu)
  \models \chi (\mu)$. Since $\chi (\mu) \timplies \alpha (\theta)$,
  $\alpha^* (\nu) \models \alpha (\theta)$. By
  Lemma~\ref{lemma:abstraction-correspondence}~(1), $\nu \models
  \theta$. Therefore, $\gamma^* (\mu) \timplies \theta$. 
\end{proof}
}

\subsection{CDNF Learning Algorithm}
\label{subsec:cdnf}

CDNF algorithm~\cite{IC95} is an exact learning algorithm for Boolean
formulae based on monotone theory. It infers an unknown target formula
by posing queries to a teacher. The teacher is responsible for
answering two types of queries. The learning algorithm may ask if a
valuation satisfies the target formula by a membership query. Or it
may ask if a conjectured formula is equivalent to the target in an
equivalence query. Using the answers for the queries, CDNF algorithm
infers a Boolean formula equivalent to the unknown target within a
polynomial number of queries in the formula size of the
target~\cite{IC95}.

\subsection{Programs}
We consider the following imperative language in this paper:
\begin{equation*}
  \begin{array}{rcl}
    \mathsf{Stmt} & \defn &
    \mathtt{nop} \ |\ 
    \mathsf{Stmt} \mathtt{;}\ \mathsf{Stmt} \ |\ x \ \mathtt{:=}\
    \mathsf{Exp} \ |\  x \ \mathtt{:=}\  \mathtt{nondet} \ |\ 
   \mathtt{if}\ \mathsf{BExp} 
    \ \mathtt{then}\ 
    \mathsf{Stmt}
    \ \mathtt{else}\ 
    \mathsf{Stmt}\ 
    \\
	\mathsf{Exp} & \defn &
    n \ |\ x \ |\ 
    \mathsf{Exp} + \mathsf{Exp} \ |\ 
    \mathsf{Exp} - \mathsf{Exp}\\

	\mathsf{BExp} & \defn &
    \FF \ |\  x \ |\ 
    \neg \mathsf{BExp} \ |\ 
    \mathsf{BExp} \wedge \mathsf{BExp} \ |\ 
    \mathsf{Exp} < \mathsf{Exp} \ |\ 
    \mathsf{Exp} = \mathsf{Exp}
  \end{array}
\end{equation*}
Two basic types are available:
natural numbers and Booleans. A term in $\mathsf{Exp}$ is a natural
number; a term 
in $\mathsf{BExp}$ is of Boolean type.
The keyword $\mathtt{nondet}$ denotes an arbitrary value in the type
of the assigned variable. 
An \emph{annotated loop} is of the form:
\begin{equation*}
\mathtt{\{} \delta \mathtt{\}}\
\mathtt{while}\ \kappa\ \mathtt{do}\ S_1; S_2; \cdots; S_m\
\mathtt{done}\ \mathtt{\{} \epsilon \mathtt{\}}
\end{equation*}
The $\mathsf{BExp}$ formula $\kappa$ is the \emph{loop guard}.
The $\mathsf{BExp}$ formulae $\delta$ and $\epsilon$ are the
\emph{precondition} and \emph{postcondition} of the annotated loop
respectively. 

Define $\ox^{\langle k \rangle} = \{ x^{\langle k \rangle} : x \in \ox
\}$. For any term $e$ over $\ox$, define $e^{\langle k \rangle} = e[\ox
\mapsto \ox^{\langle k \rangle}]$.
A \emph{transition formula} $\valueof{S}$ for a statement $S$ is a
first-order formula over variables $\ox^{\langle 0 \rangle} \cup
\ox^{\langle 1 \rangle}$ defined as follows. 

\begin{equation*}
  \begin{array}{rcl}
    \valueof{\mathsf{nop}} & \defn & 
    \bigwedge\limits_{x \in \ox} x^{\langle 1 \rangle} = x^{\langle 0
      \rangle} \\
    \valueof{x := \mathsf{nondet}} & \defn &
    \bigwedge\limits_{y \in \ox \setminus \{ x \}} 
    y^{\langle 1 \rangle} = y^{\langle 0 \rangle}\\
    \valueof{x := e} & \defn &
    x^{\langle 1 \rangle} = e^{\langle 0 \rangle} \wedge 
    \bigwedge\limits_{y \in \ox \setminus \{ x \}} 
    y^{\langle 1 \rangle} = y^{\langle 0 \rangle}\\
    \valueof{S_0; S_1} & \defn & 
    \exists \ox. \valueof{S_0}[\ox^{\langle 1 \rangle} \mapsto \ox] \wedge
    \valueof{S_1}[\ox^{\langle 0 \rangle} \mapsto \ox]\\
    \valueof{\mathsf{if}\ p\ \mathsf{then}\ S_0\ \mathsf{else}\ S_1} &
    \defn &
    (p^{\langle 0 \rangle} \wedge \valueof{S_0}) \vee 
    (\neg p^{\langle 0 \rangle} \wedge \valueof{S_1})
  \end{array}
\end{equation*}
\\

Let $\nu$ and $\nu'$ be valuations, and $S$ a
statement. We write $\nu \goesto{S} \nu'$ if $\valueof{S}$
evaluates to true by assigning $\nu (x)$ and $\nu' (x)$ to
$x^{\langle 0 \rangle}$ and $x^{\langle 1 \rangle}$ for each $x \in \ox$
respectively. Given a sequence of  
statements $S_1; S_2; \cdots; S_m$, a \emph{program execution} 
$\nu_0 \goesto{S_1} \nu_1 \goesto{S_2} \cdots
\goesto{S_m} \nu_m$ is a sequence $[\nu_0, \nu_1, \ldots, \nu_m]$ of 
valuations such that $\nu_i \goesto{S_i} \nu_{i+1}$ for $0 \leq i < m$. 

A \emph{precondition} $\precond (\theta : S)$ for $\theta \in \Prop$
with respect to the statement $S$, which is a first-order formula that
entails $\theta$ after executing the statement $S$, is defined as
follows.

\begin{equation*}
  \begin{array}{rcl}
    \precond (\theta : \mathsf{nop}) & \defn & \theta\\
    \precond (\theta : x \ \mathsf{:=}\ \mathsf{nondet}) & \defn &
    \forall x. \theta \\
    \precond (\theta : x \ \mathsf{:=}\ e) & \defn &
    \theta[x \mapsto e] \\
    \precond (\theta : S_0\mathsf{;}\ S_1) & \defn &
    \precond (\precond (\theta : S_1) : S_0)\\
    \precond (\theta : \mathsf{if}\ p\ \mathsf{then}\ S_0\
    \mathsf{else}\ S_1) & \defn &
    (p \timplies \precond (\theta : S_0)) \wedge
    (\neg p \timplies \precond (\theta : S_1))\\
  \end{array}
\end{equation*}
\\
Observe that all universal quantifiers occur positively in $\precond
(\theta : S)$ for any $S$. They can be eliminated by Skolem
constants~\cite{POPL02,VMCAI04}.
\\

\subsection{Problem Definition}

Given an annotated loop,

\begin{equation*}
  \mathtt{\{} \delta \mathtt{\}}\
  \mathtt{while}\ \kappa\ \mathtt{do}\ S_1; S_2; \cdots; S_m\
  \mathtt{done}\ \mathtt{\{} \epsilon \mathtt{\}},
\end{equation*}
\\
the \emph{loop invariant inference problem} is to compute an invariant
$\iota \in \Prop$ that is a formula satisfying
\begin{enumerate}[(1)]
\item $\delta \Rightarrow \iota$;

\item $\iota \wedge \neg \kappa \Rightarrow \epsilon$; and

\item $\iota \wedge \kappa \Rightarrow \precond (\iota : S_1;
  S_2; \cdots; S_m)$.
\end{enumerate}
Observe that the condition (2) is equivalent to $\iota \Rightarrow
\epsilon \vee \kappa$. The first two conditions specify necessary and
sufficient conditions of any loop invariants respectively. The
formulae $\delta$ and $\epsilon \vee \kappa$ are called the
\emph{strongest} and \emph{weakest approximations} to loop invariants
respectively.

We are particularly interested in the following variant of the loop
invariant inference problem:

\begin{enumerate}[(a)]

\item Given a set $P$ of atomic predicates, finding an invariant
  $\iota \in \Prop[P]$; and

\item Given an annotated loop, finding a suitable set $P$ of
  atomic predicates that contains enough predicates to express at
  least one of the invariants.

\end{enumerate}

Jung \etal{} propose a algorithmic-learning-based
technique~\cite{VMCAI10} that solves the part (a) of the problem. The
technique combines predicate abstraction and decision procedures to
make a mechanical teacher that answers the queries from learning
algorithm. With predicate abstraction, the learning algorithm becomes
an efficient engine for exploring possible combinations of predicates
to find an invariant. 

In this paper, we address the part (b) of the problem using
interpolation. As already stated in
Section~\ref{subsec:interpolation}, interpolation provides a
systematic method for predicate generation and widely adopted in
software model checking. We explain the application of interpolation
in the context of learning-based loop invariant inference.

\section{Inferring Loop Invariants with Algorithmic Learning}
\label{section:inferring-unquantified-loop-invariants}

In this section, we review the learning-based framework for inferring
quantifier-free loop invariant due to Jung \etal~\cite{VMCAI10}. Given
a set $P$ of atomic predicates, the authors show how to apply a
learning algorithm for Boolean formulae to infer quantifier-free loop
invariants freely generated by $P$. They first adopt predicate
abstraction to relate quantifier-free and Boolean formulae. They then
design a mechanical teacher to guide the learning algorithm to a
Boolean formula whose concretization is a loop invariant. We first
explain the algorithms for resolving queries from the learning
algorithm and then the main loop of learning-based loop invariant
inference.

\subsection{Answering Queries from Algorithmic Learning}
\label{subsec:answering_queries}

\begin{figure}
  \centering
  \resizebox{0.6\textwidth}{!}{\input 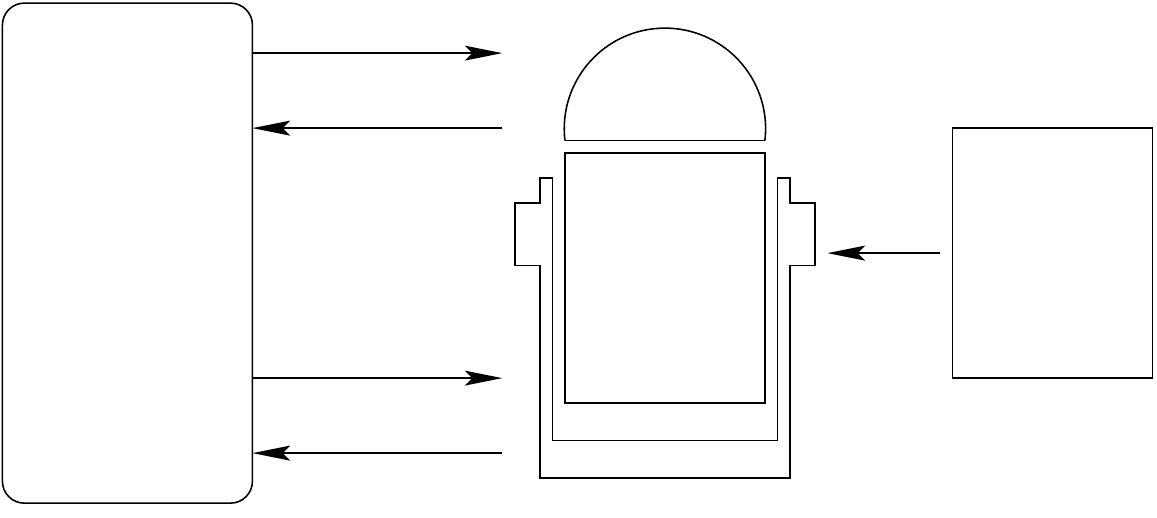_t}
  \caption{Learning-based Framework}
  \label{figure:framework}
\end{figure}

Figure~\ref{figure:framework} shows a high-level view of
learning-based loop invariant inference framework. In the framework, a
learning algorithm is used to drive the search of loop invariants. It
``learns'' an unknown loop invariant by inquiring a mechanical
teacher. The mechanical teacher of course does not know any loop
invariant. It nevertheless tries to answer these queries by the
information derived from program texts. In this case, the teacher uses
approximations to loop invariants. By employing a learning algorithm,
it suffices to design a mechanical teacher to find loop
invariants. Moreover, the new framework does not construct abstract
models nor compute fixed points. It can be more scalable than
traditional techniques.

After formulae in $\Prop$ and valuations in $\Val{\ox}$ are abstracted
to those in $\Bool[\obp]$ and $\Val{\obp}$ respectively, a learning
algorithm is used to infer abstractions of loop invariants. Let $\xi$
be an unknown \emph{target} Boolean formula in $\Bool[\obp]$. A
learning algorithm computes a representation of the target
$\xi$ by interacting with a teacher. The \emph{teacher} should answer
the following queries~\cite{IC95}:
\begin{iteMize}{$\bullet$}
\item \emph{Membership queries}. Let $\mu \in \Val{\obp}$ be an
  abstract valuation. The membership
  query $\MEM (\mu)$ asks if the unknown target $\xi$ is satisfied by
  $\mu$. If so, the teacher answers $\YES$; otherwise, $\NO$.
\item \emph{Equivalence queries}. Let $\beta \in
  \Bool[\obp]$ be an
  \emph{abstract conjecture}. The equivalence query $\EQ (\beta)$ asks if
  $\beta$ is equivalent to the unknown target $\xi$. If so, the
  teacher answers $\YES$. Otherwise, the teacher gives an abstract
  valuation $\mu$ such that the exclusive disjunction of $\beta$ and
  $\xi$ is satisfied by $\mu$. The abstract valuation $\mu$ is called an
  \emph{abstract counterexample}. 
\end{iteMize}

With predicate abstraction and a learning algorithm for Boolean
formulae at hand, it remains to design a mechanical teacher to guide the
learning algorithm to the abstraction of a loop invariant.
The key idea in~\cite{VMCAI10} is to exploit approximations to loop
invariants. An \emph{under-approximation} to loop invariants is a
quantifier-free formula $\underline{\iota}$ which is stronger than
some loop invariants of the given annotated loop; an
\emph{over-approximation} is a quantifier-free
formula $\overline{\iota}$ which is weaker than some loop
invariants. 

In the following, we explain exactly how we can answer queries from
learning algorithm using under- and over-approximation of loop
invariant.

\subsubsection{Answering Membership Queries}
In the membership query $\MEM (\mu)$, the teacher 
is required to answer whether $\mu \models \alpha (\xi)$. We
concretize the Boolean valuation $\mu$ and check it against the
approximations.  If the concretization $\gamma^* (\mu)$ is
inconsistent (that is, $\gamma^* (\mu)$ is unsatisfiable), we simply
answer $\mathit{NO}$ for the membership query. Otherwise, there are
three cases:
\begin{enumerate}[(1)]
\item $\gamma^* (\mu) \timplies \underline{\iota}$. Thus $\mu \models
  \alpha (\underline{\iota})$ (Lemma~\ref{lemma:concretization}). 
  And $\mu \models \alpha (\iota)$ by
  Lemma~\ref{lemma:monotonic-abstraction}.
\item $\gamma^* (\mu) \nimplies \overline{\iota}$. Thus $\mu
  \not\models \alpha (\overline{\iota})$
  (Lemma~\ref{lemma:concretization}). That is, $\mu \models \neg
  \alpha (\overline{\iota})$. Since $\iota \rightarrow 
  \overline{\iota}$, we have $\mu \not\models \alpha (\iota)$ by
  Lemma~\ref{lemma:monotonic-abstraction}.
\item Otherwise, we cannot determine whether $\mu \models \alpha
  (\iota)$ by the approximations. In this case, we answer YES or NO
  randomly.
\end{enumerate}  

\begin{algorithm}
  \tcc{$\underline{\iota}, \overline{\iota}$ : under- and
    over-approximations to loop invariants}
  \KwIn{a membership query $\MEM (\mu)$ with $\mu \in \Val{\obp}$}
  \KwOut{$\YES$ or $\NO$}
  $\theta$ := $\gamma^* (\mu)$\;
  \lIf{$\theta$ is inconsistent}
  {
    \Return{$\NO$}\;
  }
  \lIf{$\theta \Rightarrow \underline{\iota}$}
  {
    \Return{$\YES$}\;
  }
  \lIf{$\nu \models \neg (\theta \Rightarrow \overline{\iota})$}
  {
    \Return{$\NO$}\;
  }
  \Return{$\YES$ or $\NO$ randomly}\;
  \phantom{x}
  \caption{Membership Query Resolution}
  \label{algorithm:membership}
\end{algorithm}

Algorithm~\ref{algorithm:membership} shows our membership query
resolution algorithm. Note that instead of giving a random answer when
a membership query cannot be resolved by given invariant
approximations, one can give more accurate answer by exploiting better
approximations from static analyzers. This learning-based framework is
orthogonal to existing static analysis techniques~\cite{VMCAI10}.

\subsubsection{Answering Equivalence Queries}
To answer the equivalence query $\EQ (\beta)$, we concretize the
Boolean formula $\beta$ and check if $\gamma (\beta)$ is indeed an
invariant of the \texttt{while} statement for the given pre- and
post-conditions. If it is, we are done. Otherwise, we use an SMT
solver to find a witness to $\alpha (\xi) \oplus \beta$.  There are
three cases:
\begin{enumerate}[(1)]
\item There is a $\nu$ such that $\nu \models \neg (\underline{\iota}
  \timplies \gamma (\beta))$. Then $\nu \models \underline{\iota}
  \wedge \neg \gamma (\beta)$. By Lemma~\ref{lemma:abstraction-correspondence}
  and~\ref{lemma:monotonic-abstraction}, we have $\alpha^* (\nu)
  \models \alpha (\iota)$ and $\alpha^* (\nu) \models \neg \beta$.
  Thus, $\alpha^* (\nu)
  \models \alpha (\xi) \wedge \neg \beta$.
\item There is a $\nu$ such that $\nu \models \neg (\gamma (\beta)
  \timplies \overline{\iota})$. Then $\nu \models \gamma (\beta)
  \wedge \neg \overline{\iota}$. By
  Lemma~\ref{lemma:abstraction-correspondence}, 
  $\alpha^* (\nu) \models \beta$. $\alpha^* (\nu) \models \neg \alpha
  (\iota)$ by Lemma~\ref{lemma:abstraction-correspondence}
  and~\ref{lemma:monotonic-abstraction}. Hence $\alpha^* (\nu) \models
  \beta \wedge \neg \alpha (\xi)$.
\item Otherwise, we cannot find a witness to $\alpha (\xi) \oplus
  \beta$ by the approximations. In this case, we give a random
  abstract counterexample.
\end{enumerate}

\begin{algorithm}
  \tcc{$\mathtt{\{} \delta \mathtt{\}}\ \mathtt{while}\ \kappa\
    \mathtt{do}\ S_1; S_2; \cdots; S_m\ \mathtt{done}\
    \mathtt{\{} \epsilon \mathtt{\}}$ : an annotated loop}
  \tcc{$\underline{\iota}, \overline{\iota}$ : under- and
    over-approximations to loop invariants}
  \KwIn{an equivalence query $\EQ (\beta)$ with $\beta \in
    \Bool[\obp]$} 
  \KwOut{$\YES$ or an abstract counterexample}
  $\theta$ := $\gamma (\beta)$\;
  \lIf{$\delta \Rightarrow \theta$ and $\theta \Rightarrow \epsilon
    \vee \kappa$ and $\theta \wedge \kappa \Rightarrow \precond
    (\theta : S_1; S_2; \cdots; S_m)$}
  \Return{$\YES$}\;
  \uIf{$\nu \models \neg (\underline{\iota} \Rightarrow \theta)$ or
    $\nu \models \neg (\theta \Rightarrow \overline{\iota})$ or
    $\nu \models \neg (\theta \wedge \kappa \Rightarrow \precond
    (\overline{\iota} : S_1; S_2; \cdots; S_m))$}
  { \Return{$\alpha^* (\nu)$}\;}
  \Return{a random abstract counterexample}\;
  \phantom{x}
  \caption{Equivalence Query Resolution}
  \label{algorithm:equivalence}
\end{algorithm}

Algorithm~\ref{algorithm:equivalence} shows our equivalence query
resolution algorithm. Note that Algorithm~\ref{algorithm:equivalence}
returns $\mathit{YES}$ only if an invariant is found.

As in the membership query resolution, we give a random answer when an
equivalence query is not resolved by given invariant approximations. We
can still refine approximations using some static analysis to give
more accurate counterexample.

\subsection{Main Loop of of Inference Framework}
\label{subsec:main-loop}

\begin{algorithm}[h]
  \tcc{$\mathtt{\{} \delta \mathtt{\}}\ \mathtt{while}\ \kappa\
    \mathtt{do}\ S_1; S_2; \cdots; S_m\ \mathtt{done}\
    \mathtt{\{} \epsilon \mathtt{\}}$ : an annotated loop}
  \KwOut{a loop invariant for the annotated loop}
  $\underline{\iota}$ := $\delta \vee \epsilon$\;
  $\overline{\iota}$ := $\epsilon \vee \kappa$\;
  \Repeat{a loop invariant is found}
  {
    \ \ \textbf{call} a learning algorithm for Boolean formulae
    where membership and\\
    \ \ equivalence queries are resolved by
    Algorithms~\ref{algorithm:membership} 
    and~\ref{algorithm:equivalence} respectively\;
  }
  \phantom{x}
  \caption{Main Loop}
  \label{algorithm:main-loop}
\end{algorithm}

The main loop of loop invariant inference algorithm is given in
Algorithm~\ref{algorithm:main-loop}. We heuristically choose $\delta
\vee \epsilon$ and $\epsilon \vee \kappa$ as the under- and
over-approximations respectively. Note that the under-approximation
would be stronger if one uses the strongest approximation $\delta$. It
is, however, reported that the weaker approximation $\delta \vee
\epsilon$ for the under-approximation is more effective in resolving
queries~\cite{VMCAI10}. After determining the approximations, a
learning algorithm is used to find an invariant. In~\cite{VMCAI10},
Jung \etal{} use CDNF algorithm with
Algorithms~\ref{algorithm:membership} and~\ref{algorithm:equivalence}
for resolving queries.

Note that the mechanical teacher may give conflicting answers. Random
answers to membership queries may contradict abstract counterexamples
from equivalence queries. Moreover, different valuations may
correspond to the same abstract valuation. The learning algorithm
cannot infer any loop invariant in the presence of conflicting
answers. When the mechanical teacher gives conflicting answers, we
restart the learning algorithm and search another loop invariant. In
practice, there are nevertheless sufficiently many invariants for an
annotated loop. The learning-based technique can infer a loop
invariant without incurring any conflicts after a small number of
restarts. As an empirical evidence, observe the number of restarts in
Table~\ref{table:experiments}. Even without the new predicate
generation technique, the numbers of restarts in all but three
examples are less than three. The number of restarts is dramatically
improved with the new technique since the technique generates
predicates incrementally on demand so that it can make the abstraction
parsimonious.

We remark that the learning-based loop invariant inference is
semi-algorithm; Algorithm~\ref{algorithm:main-loop} terminates with a
loop invariant only when there exists one for the loop that can be
expressed with the given set of predicates. If there are not enough
atomic predicates to express any invariant, the algorithm will iterate
indefinitely. For example, \texttt{tar} example in
Section~\ref{section:experimental-results} timed out because it turned
out to have no invariant with only atomic predicates from the program
text.

\section{Predicate Generation by Interpolation}
\label{section:predicate-generation-by-interpolation}

One drawback in the learning-based approach to loop invariant
inference is to require a set of atomic predicates. 
It is essential that at least one quantifier-free loop
invariant is representable by the given set $P$ of atomic predicates.
Otherwise, concretization of formulae in $\Bool[\obp]$ cannot be
loop invariants. The mechanical teacher never answers $\YES$ to 
equivalence queries. To address this problem, we will synthesize new
atomic predicates for the learning-based loop invariant inference framework
progressively. 

The interpolation is essential to our predicate generation
technique. Let $\Theta = [ \theta_1, \theta_2, \ldots, \theta_m ]$ be
an inconsistent sequence of quantifier-free formula and $\Lambda =
[\lambda_0, \lambda_1, \lambda_2, \ldots, \lambda_m]$ its inductive
interpolant. By definition, $\theta_1 \Rightarrow \lambda_1$. Assume
$\theta_1 \wedge \theta_2 \wedge \cdots \wedge \theta_i \Rightarrow
\lambda_i$. We have $\theta_1 \wedge \theta_2 \wedge \cdots \wedge
\theta_{i+1} \Rightarrow \lambda_{i+1}$ since $\lambda_i \wedge
\theta_{i+1} \Rightarrow \lambda_{i+1}$. Thus, $\lambda_i$ is an
over-approximation to $\theta_1 \wedge \theta_2 \wedge \cdots \wedge
\theta_i$ for $0 \leq i \leq m$. Moreover, $\symbols{\lambda_i}
\subseteq \symbols{\theta_i} \cap \symbols{\theta_{i+1}}$. Hence
$\lambda_i$ can be seen as a concise summary of $\theta_1 \wedge
\theta_2 \wedge \cdots \wedge \theta_i$ with restricted symbols. Since
each $\lambda_i$ is written in a less expressive vocabulary, new
atomic predicates among variables can be synthesized. We therefore
apply the interpolation theorem to synthesize new atomic predicates
and refine the abstraction.

Our predicate generation technique consists of three components. 
Before the learning algorithm is invoked, an initial set of atomic
predicates is computed (Section~\ref{section:initial-predicates}). 
When the learning algorithm is failing to infer loop invariants, new
atomic predicates are generated to refine the abstraction
(Section~\ref{section:predicates-incorrect-conjectures}). Lastly,
conflicting 
answers to queries may incur from predicate abstraction. We 
further refine the abstraction with these conflicting answers
(Section~\ref{section:predicates-conflicting-answers}).
Throughout this section, we consider the annotated loop $\mathtt{\{}
\delta \mathtt{\}}\ \mathtt{while}\ \kappa\ \mathtt{do}\ S_1; S_2;$
$\cdots;$ $S_m\ \mathtt{done}\ \mathtt{\{} \epsilon \mathtt{\}}$ with the
under-approximation $\underline{\iota}$ and over-approximation
$\overline{\iota}$.

\subsection{Initial Atomic Predicates}
\label{section:initial-predicates}

The under- and over-approximations to loop invariants must satisfy
$\underline{\iota} \Rightarrow \overline{\iota}$. Otherwise, there
cannot be any loop invariant $\iota$ such that $\underline{\iota}
\Rightarrow \iota$ and $\iota \Rightarrow \overline{\iota}$. Thus,
the sequence
$[\underline{\iota}, \neg \overline{\iota}]$ is inconsistent.
For any interpolant
$[\TT, \lambda, \FF]$ of $[\underline{\iota}, \neg \overline{\iota}]$,
we have $\underline{\iota} \Rightarrow \lambda$ and $\lambda \Rightarrow
\overline{\iota}$. The quantifier-free formula $\lambda$ can be a loop
invariant if it satisfies $\lambda \wedge \kappa \Rightarrow \precond
(\lambda : S_1; S_2; \cdots; S_m)$. It is however unlikely that
$\lambda$ happens to be a loop invariant. Yet our loop invariant
inference algorithm can generalize $\lambda$ by taking the atomic
predicates in $\lambda$ as the initial atomic predicates. The 
learning algorithm will try to infer a loop invariant freely generated by
these atomic predicates. 

\subsection{Atomic Predicates from Incorrect Conjectures}
\label{section:predicates-incorrect-conjectures}

Consider an equivalence query $\EQ (\beta)$ where $\beta \in
\Bool[\obp]$ is an abstract conjecture. If the concretization $\theta
= \gamma (\beta)$ is not a loop invariant, we interpolate the loop
body with the incorrect conjecture $\theta$. For any
quantifier-free formula $\theta$ over variables $\ox^{\langle 0
  \rangle} \cup \ox^{\langle 1 \rangle}$, define 
$\theta^{\langle k \rangle} = 
\theta[\ox^{\langle 0 \rangle} \mapsto \ox^{\langle k \rangle}, 
\ox^{\langle 1 \rangle} \mapsto \ox^{\langle k+1 \rangle}]$. 
The \emph{desuperscripted} form of 
a quantifier-free formula $\lambda$ over variables 
$\ox^{\langle k \rangle}$ is $\lambda[\ox^{\langle k \rangle}
\mapsto \ox]$. Moreover, if $\nu$ is a valuation over 
$\ox^{\langle 0 \rangle} \cup \cdots \cup \ox^{\langle m \rangle}$,
$\nu\restrict{\ox^{\langle k \rangle}}$ represents a valuation over $\ox$
  such that $\nu\restrict{\ox^{\langle k \rangle}} (x) = \nu
  (x^{\langle k \rangle})$ for $x \in \ox$.
Let $\phi$ and $\psi$ be quantifier-free formulae over
$\ox$. Define the following sequence:
\begin{equation*}
  \Xi (\phi, S_1, \ldots, S_m, \psi) \defn
  [\phi^{\langle 0 \rangle}, \valueof{S_1}^{\langle 0 \rangle},
  \valueof{S_2}^{\langle 1 \rangle}, \ldots, 
  \valueof{S_m}^{\langle m - 1 \rangle}, \neg \psi^{\langle m \rangle}].
\end{equation*}

Observe that 
\begin{iteMize}{$\bullet$}
\item $\phi^{\langle 0 \rangle}$ and $\valueof{S_1}^{\langle 0
    \rangle}$ share the variables $\ox^{\langle 0 \rangle}$; 
\item $\valueof{S_m}^{\langle m - 1 \rangle}$ and $\neg \psi^{\langle
    m \rangle}$ share the variables $\ox^{\langle m \rangle}$; and
\item $\valueof{S_i}^{\langle i - 1 \rangle}$ and
  $\valueof{S_{i+1}}^{\langle i \rangle}$ share the variables
    $\ox^{\langle i \rangle}$ for $1 \leq i < m$.
\end{iteMize}
Starting from the program states satisfying 
$\phi^{\langle 0 \rangle}$, the formula
\begin{equation*}
  \phi^{\langle 0 \rangle} \wedge \valueof{S_1}^{\langle 0 \rangle}
  \wedge \valueof{S_2}^{\langle 1 \rangle} \wedge \cdots
  \wedge \valueof{S_i}^{\langle i - 1 \rangle}
\end{equation*}
characterizes the images of $\phi^{\langle 0 \rangle}$ during the
execution of $S_1; S_2; \cdots; S_i$. 

\begin{lem}
  Let $\ox$ denote the set of variables in the statement $S_1;
  S_2; \cdots; S_i$, and $\phi$ a quantifier-free formula over $\ox$. 
  For any valuation $\nu$ over $\ox^{\langle 0 \rangle} \cup
  \ox^{\langle 1 \rangle} \cup \cdots \cup \ox^{\langle i \rangle}$, the
  formula $\phi^{\langle 0 \rangle} \wedge 
  \valueof{S_1}^{\langle 0 \rangle} \wedge 
  \valueof{S_2}^{\langle 1 \rangle} \wedge \cdots \wedge
  \valueof{S_i}^{\langle i - 1 \rangle}$ is satisfied by $\nu$ if and
  only if
  $\nu\restrict{\ox^{\langle 0 \rangle}} \goesto{S_1} 
  \nu\restrict{\ox^{\langle 1 \rangle}} \goesto{S_2} \cdots \goesto{S_i}
  \nu\restrict{\ox^{\langle i \rangle}}$ is a program execution
  and $\nu\restrict{\ox^{\langle 0 \rangle}} \models \phi$.
\end{lem}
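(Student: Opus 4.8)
The plan is to prove both directions by induction on $i$. The statement relates a conjunction of transition formulae (with suitable superscripts) to the existence of a program execution, and the natural parameter to induct on is the length $i$ of the statement sequence. The base case $i = 0$ degenerates to the claim that $\phi^{\langle 0 \rangle}$ is satisfied by $\nu$ iff $\nu\restrict{\ox^{\langle 0 \rangle}} \models \phi$, which is immediate from the definition of $\theta^{\langle k \rangle}$ and of $\nu\restrict{\ox^{\langle k \rangle}}$ (the empty execution $\nu\restrict{\ox^{\langle 0 \rangle}}$ is trivially a program execution). For $i \geq 1$, I would assume the claim for $i - 1$ and then analyze
\begin{equation*}
  \phi^{\langle 0 \rangle} \wedge \valueof{S_1}^{\langle 0 \rangle}
  \wedge \cdots \wedge \valueof{S_{i-1}}^{\langle i-2 \rangle}
  \wedge \valueof{S_i}^{\langle i-1 \rangle}.
\end{equation*}
The key observation is that $\valueof{S_i}^{\langle i-1 \rangle}$ is a formula over $\ox^{\langle i-1 \rangle} \cup \ox^{\langle i \rangle}$ only, so it shares variables with the prefix exactly at $\ox^{\langle i-1 \rangle}$; this is precisely the ``share the variables $\ox^{\langle i \rangle}$'' observation listed just before the lemma.

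For the forward direction, suppose $\nu$ satisfies the whole conjunction. Then $\nu$ restricted to $\ox^{\langle 0 \rangle} \cup \cdots \cup \ox^{\langle i-1 \rangle}$ satisfies the prefix (satisfaction of a formula depends only on the variables occurring in it), so by the induction hypothesis $\nu\restrict{\ox^{\langle 0 \rangle}} \goesto{S_1} \cdots \goesto{S_{i-1}} \nu\restrict{\ox^{\langle i-1 \rangle}}$ is a program execution and $\nu\restrict{\ox^{\langle 0 \rangle}} \models \phi$. Separately, $\nu \models \valueof{S_i}^{\langle i-1 \rangle}$ unwinds, by the definition of $\goesto{S_i}$ and the superscript renaming $\ox^{\langle 0 \rangle} \mapsto \ox^{\langle i-1 \rangle}$, $\ox^{\langle 1 \rangle} \mapsto \ox^{\langle i \rangle}$, to $\nu\restrict{\ox^{\langle i-1 \rangle}} \goesto{S_i} \nu\restrict{\ox^{\langle i \rangle}}$. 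Concatenating gives the desired program execution. The converse direction is symmetric: given the program execution and $\nu\restrict{\ox^{\langle 0 \rangle}} \models \phi$, the prefix execution plus the induction hypothesis yields that $\nu$ satisfies the prefix conjunction, and the last step $\nu\restrict{\ox^{\langle i-1 \rangle}} \goesto{S_i} \nu\restrict{\ox^{\langle i \rangle}}$ yields $\nu \models \valueof{S_i}^{\langle i-1 \rangle}$; conjoining proves the claim.

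I do not expect a genuine obstacle here — the lemma is essentially a bookkeeping statement unwinding the definitions of $\valueof{\cdot}$, superscripting, and restriction. The one point that needs care is the handling of free variables: one must note that whether a valuation satisfies a formula depends only on the valuation's action on the variables actually occurring in that formula, so restricting $\nu$ to a subset of the $\ox^{\langle k \rangle}$ blocks does not affect satisfaction of the prefix, and likewise extending a valuation on $\ox^{\langle 0 \rangle} \cup \cdots \cup \ox^{\langle i-1 \rangle}$ to one also defined on $\ox^{\langle i \rangle}$ is unproblematic. A secondary subtlety is that $\valueof{S_0; S_1}$ is defined with an existential quantifier, so strictly speaking $\valueof{S_i}$ need not be quantifier-free if $S_i$ is itself a sequence; however, in the annotated-loop setting each $S_j$ is an atomic statement (assignment, $\mathtt{nondet}$, or conditional), for which $\valueof{S_j}$ is quantifier-free, so the relation $\nu \goesto{S_j} \nu'$ is well-behaved and the argument goes through cleanly.
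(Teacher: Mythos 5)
Your proposal is correct and follows essentially the same route as the paper: induction on the length of the statement sequence, splitting off the last transition formula $\valueof{S_{i}}^{\langle i-1 \rangle}$ and identifying its satisfaction with the step $\nu\restrict{\ox^{\langle i-1 \rangle}} \goesto{S_{i}} \nu\restrict{\ox^{\langle i \rangle}}$. The paper's own proof is a terser version of the same argument; your write-up just makes the base case, the two directions, and the variable-disjointness bookkeeping explicit.
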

\begin{proof}
  By induction on the length of statement
  $S_1;S_2;\cdots;S_i$. Suppose that the lemma is true for statement
  $S_1;S_2;\cdots;S_i$. By definition of program execution, if
  $\nu\restrict{\ox^{\langle i \rangle}} \goesto{S_{i+1}}
  \nu\restrict{\ox^{\langle i + 1 \rangle}}$, then $\nu$ satisfies
  $\valueof{S_{i + 1}}^{\langle i \rangle}$ and vice versa. By
  induction hypothesis, the formula $\phi^{\langle 0 \rangle} \wedge
  \valueof{S_1}^{\langle 0 \rangle} \wedge \valueof{S_2}^{\langle 1
    \rangle} \wedge \cdots \wedge \valueof{S_{i + 1}}^{\langle i
    \rangle}$ is satisfied by $\nu$ and the statement follows by it.
\end{proof}

By definition, $\phi
\Rightarrow \precond (\psi : S_1; S_2; \cdots; S_m)$ implies that the image
of $\phi$ must satisfy $\psi$ after the execution of $S_1; S_2;
\cdots; S_m$. The sequence $\Xi (\phi, S_1, \ldots, S_m, \psi)$
is inconsistent if $\phi \Rightarrow \precond (\psi : S_1; S_2; \cdots;
S_m)$. The following proposition will be handy.

\begin{prop}
  \label{proposition:approximation}
  Let $S_1; S_2; \cdots; S_m$ be a sequence of statements.
  For any $\phi$ with $\phi \Rightarrow \precond (\psi :
  S_1; S_2; \cdots; S_m)$, $\Xi (\phi, S_1, \ldots, S_m, \psi)$ has
  an inductive interpolant.
\end{prop}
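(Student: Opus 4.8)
The plan is to reduce Proposition~\ref{proposition:approximation} to the interpolation theorem stated in Section~\ref{subsec:interpolation}, which asserts that every inconsistent sequence of quantifier-free formulae has an inductive interpolant. Since $\Xi(\phi, S_1, \ldots, S_m, \psi)$ is already such a sequence by construction, it suffices to show that it is inconsistent under the hypothesis $\phi \Rightarrow \precond(\psi : S_1; S_2; \cdots; S_m)$. Everything then follows immediately.

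First I would recall the correspondence between the conjunction $\phi^{\langle 0 \rangle} \wedge \valueof{S_1}^{\langle 0 \rangle} \wedge \cdots \wedge \valueof{S_m}^{\langle m-1 \rangle}$ and program executions, which is exactly the content of the lemma immediately preceding the proposition (instantiated with $i = m$): a valuation $\nu$ over $\ox^{\langle 0 \rangle} \cup \cdots \cup \ox^{\langle m \rangle}$ satisfies this conjunction if and only if $\nu\restrict{\ox^{\langle 0 \rangle}} \goesto{S_1} \cdots \goesto{S_m} \nu\restrict{\ox^{\langle m \rangle}}$ is a program execution with $\nu\restrict{\ox^{\langle 0 \rangle}} \models \phi$. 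Next I would invoke the semantic meaning of $\precond$: if $\nu\restrict{\ox^{\langle 0 \rangle}} \models \phi$ and $\phi \Rightarrow \precond(\psi : S_1; \cdots; S_m)$, then $\nu\restrict{\ox^{\langle 0 \rangle}} \models \precond(\psi : S_1; \cdots; S_m)$, so after executing the statement block the resulting state must satisfy $\psi$; that is, $\nu\restrict{\ox^{\langle m \rangle}} \models \psi$, hence $\nu \not\models \neg\psi^{\langle m \rangle}$. Therefore no valuation can simultaneously satisfy all the formulae in $\Xi(\phi, S_1, \ldots, S_m, \psi)$, so the sequence is inconsistent. Applying the interpolation theorem to this inconsistent sequence yields the desired inductive interpolant, completing the proof.

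The only genuinely load-bearing step is the soundness claim connecting $\precond$ with transition formulae, namely that the last valuation in an execution starting from a state satisfying $\precond(\psi : S_1; \cdots; S_m)$ satisfies $\psi$. This is the standard weakest-precondition property and could be discharged by a short induction on the statement block mirroring the definition of $\precond$ and $\valueof{\cdot}$; I expect this to be the main (and only) obstacle, though it is routine, and the paper may well take it as known. One minor subtlety to be careful about is the treatment of $\precond(\theta : x := \mathsf{nondet}) = \forall x.\,\theta$ and the universally quantified Skolem constants — but since we only need soundness (the forward direction), the universal quantifier poses no difficulty: if $\nu$ satisfies $\forall x.\,\theta$ then every successor state satisfies $\theta$, which is exactly what the nondeterministic transition relation ranges over.
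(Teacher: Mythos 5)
Your proof is correct, but it takes a genuinely different route from the paper's. You reduce the proposition to the interpolation theorem by showing the sequence $\Xi(\phi, S_1, \ldots, S_m, \psi)$ is inconsistent: the preceding lemma identifies satisfying valuations of $\phi^{\langle 0 \rangle} \wedge \valueof{S_1}^{\langle 0 \rangle} \wedge \cdots \wedge \valueof{S_m}^{\langle m-1 \rangle}$ with program executions starting in $\phi$, and soundness of $\precond$ forces the final state of any such execution to satisfy $\psi$, contradicting $\neg\psi^{\langle m \rangle}$. The paper instead argues by induction on the length of the statement sequence: it unfolds $\precond(\psi : S_1; \cdots; S_m)$ as $\precond(\precond(\psi : S_2; \cdots; S_m) : S_1)$, introduces a formula $\phi'$ describing the image of $\phi$ under $S_1$, applies the induction hypothesis to $\Xi(\phi', S_2, \ldots, S_m, \psi)$, and concludes that the full sequence also admits an inductive interpolant. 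Your reduction is the more self-contained of the two: it matches the paper's own remark, stated just before the proposition, that the sequence is inconsistent whenever $\phi \Rightarrow \precond(\psi : S_1; \cdots; S_m)$, and it isolates the one load-bearing fact (the weakest-precondition soundness property) that the paper uses only informally. The paper's induction hints at how an interpolant could be assembled statement by statement, which is closer to how interpolating provers operate, but as written it leaves the final gluing step --- extending an interpolant of the tail sequence to one for the whole sequence --- unjustified, whereas your argument has no such gap. Both arguments tacitly assume the transition formulae in the sequence are quantifier-free (or that their positively occurring quantifiers have been eliminated) so that the interpolation theorem of Section~\ref{subsec:interpolation} applies; this is a shared, minor technicality rather than a defect of your proposal.
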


\begin{proof}
  By induction on the length of statement
  $S_1;S_2;\cdots;S_m$. Suppose the proposition holds for statement
  $S_2;\cdots;S_m$ and an arbitrary formula $\phi$ with $\phi
  \Rightarrow \precond (\psi : S_1; S_2; \cdots; S_m)$. By definition
  of $\precond$, $\precond (\psi : S_1; S_2; \cdots; S_m) = \precond
  (\precond(\psi : S_2; \cdots; S_m) : S_1)$ Let $\phi'$ be a formula
  such that $\phi$ satisfies $\phi'$ after execution of $S_1$. By
  induction hypothesis, $\Xi(\phi', S_2, \ldots, S_m, \psi)$ has an
  inductive interpolant. Thus, $\Xi(\phi, S_1, \ldots, S_m, \psi)$
  also has inductive interpolant.
  
\end{proof}

Let $\Lambda = [\TT, \lambda_1, \lambda_2, \ldots, \lambda_{m+1}, \FF]$ be
an inductive interpolant of $\Xi (\phi, S_1, \ldots, S_m, \psi)$. Recall that
$\lambda_i$ is a quantifier-free formula over 
$\ox^{\langle i - 1 \rangle}$ for $1 \leq i \leq m + 1$. It is also an
over-approximation to the 
image of $\phi$ after executing $S_1; S_2; \cdots; S_{i - 1}$. 
Proposition~\ref{proposition:approximation} can be used to generate new
atomic predicates. One simply finds a pair of quantifier-free formulae
$\phi$ and $\psi$ with $\phi \Rightarrow \precond (\psi : S_1; S_2;
\cdots; S_m)$, applies the interpolation theorem, and collects
desuperscripted atomic predicates in an inductive interpolant of $\Xi (\phi,
S_1, \ldots, S_m, \psi)$. In the
following, we show how to obtain such pairs with under- and
over-approximations to loop invariants.

\subsubsection{Interpolating Over-Approximation}

It is not hard to see that an over-approximation to loop invariants
characterizes loop invariants after the execution of the loop
body. Recall that $\iota \Rightarrow \overline{\iota}$ for some loop
invariant $\iota$. Moreover, $\iota \wedge \kappa \Rightarrow
\precond (\iota : S_1; S_2; \cdots; S_m)$. By
the monotonicity of $\precond (\bullet : S_1; S_2; \cdots; S_m)$,
we have $\iota \wedge \kappa
\Rightarrow \precond (\overline{\iota} : S_1; S_2; \cdots; S_m)$. 

\begin{prop}
  \label{proposition:necessary-over-approximation}
  Let $\overline{\iota}$ be an over-approximation to loop invariants
  of the annotated loop $\mathtt{\{} \delta \mathtt{\}}\
  \mathtt{while}\ \kappa\  
  \mathtt{do}\ S_1; S_2; \cdots; S_m\ \mathtt{done}\ \mathtt{\{}
  \epsilon \mathtt{\}}$. 
  For any loop invariant $\iota$ with $\iota \Rightarrow
  \overline{\iota}$, $\iota \wedge \kappa \Rightarrow \precond
  (\overline{\iota} : S_1; S_2; \cdots; S_m)$. 
\end{prop}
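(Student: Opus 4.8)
The plan is to derive the claim directly from the definition of a loop invariant together with the monotonicity of the predicate transformer $\precond(\bullet : S_1; S_2; \cdots; S_m)$. First I would recall condition~(3) in the definition of the loop invariant inference problem: since $\iota$ is a loop invariant of the given annotated loop, it satisfies $\iota \wedge \kappa \Rightarrow \precond(\iota : S_1; S_2; \cdots; S_m)$. It therefore suffices to establish $\precond(\iota : S_1; S_2; \cdots; S_m) \Rightarrow \precond(\overline{\iota} : S_1; S_2; \cdots; S_m)$ and then chain the two implications.

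The key lemma to prove is that $\precond$ is monotone in its formula argument: for every statement $S$ and quantifier-free formulae $\theta, \rho$ with $\theta \Rightarrow \rho$, one has $\precond(\theta : S) \Rightarrow \precond(\rho : S)$. I would prove this by structural induction on $S$, following the recursive definition of $\precond$. The base cases are immediate: $\precond(\theta : \mathsf{nop}) = \theta$ is the identity; $\precond(\theta : x := e) = \theta[x \mapsto e]$, and substitution preserves implication; and $\precond(\theta : x := \mathsf{nondet}) = \forall x.\,\theta$, with $\forall x.\,\theta \Rightarrow \forall x.\,\rho$ whenever $\theta \Rightarrow \rho$. For sequential composition, $\precond(\theta : S_0; S_1) = \precond(\precond(\theta : S_1) : S_0)$, so applying the induction hypothesis to $S_1$ gives $\precond(\theta : S_1) \Rightarrow \precond(\rho : S_1)$, and applying it again to $S_0$ gives the desired implication. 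For the conditional, $\precond(\theta : \mathsf{if}\ p\ \mathsf{then}\ S_0\ \mathsf{else}\ S_1) = (p \Rightarrow \precond(\theta : S_0)) \wedge (\neg p \Rightarrow \precond(\theta : S_1))$; the induction hypothesis gives monotonicity of each branch, and since both $q \mapsto (p \Rightarrow q)$ and conjunction are monotone in the relevant positions, the result follows.

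Finally, instantiating the monotonicity lemma with the hypothesis $\iota \Rightarrow \overline{\iota}$ and the statement $S_1; S_2; \cdots; S_m$ yields $\precond(\iota : S_1; S_2; \cdots; S_m) \Rightarrow \precond(\overline{\iota} : S_1; S_2; \cdots; S_m)$, and composing this with condition~(3) gives $\iota \wedge \kappa \Rightarrow \precond(\overline{\iota} : S_1; S_2; \cdots; S_m)$, as required. The only part calling for any care is the structural induction for the monotonicity lemma — in particular making sure the universal-quantifier case for $\mathsf{nondet}$ and the negated-guard position in the conditional case are handled correctly — but none of the steps is deep; the whole argument is essentially bookkeeping over the syntax of statements.
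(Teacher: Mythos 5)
Your proof is correct and follows essentially the same route as the paper: invoke condition (3) of the loop-invariant definition, i.e.\ $\iota \wedge \kappa \Rightarrow \precond(\iota : S_1; \cdots; S_m)$, and conclude by monotonicity of $\precond(\bullet : S_1; \cdots; S_m)$. The only difference is that the paper simply asserts monotonicity of $\precond$, whereas you spell out its proof by structural induction on statements — a reasonable addition, but not a different approach.
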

\zap{
\begin{proof}
  Since $\iota$ is a loop invariant, $\iota \wedge \kappa \Rightarrow
  \precond (\iota : S)$. The statement follows by the monotonicity of
  $\precond (\bullet : S)$.
\end{proof}
}

Proposition~\ref{proposition:necessary-over-approximation} gives a
necessary condition to loop invariants of interest. Recall that
$\theta = \gamma (\beta)$ is an incorrect conjecture of loop
invariants. If $\nu \models \neg (\theta \wedge \kappa \Rightarrow
\precond (\overline{\iota} : S_1; S_2; \cdots; S_m))$, the mechanical
teacher returns the abstract counterexample $\alpha^* (\nu)$. Otherwise,
Proposition~\ref{proposition:approximation} is applicable with the
pair $\theta \wedge \kappa$ and $\overline{\iota}$.

\begin{cor}
  \label{corollary:over-approximation}
  Let $\overline{\iota}$ be an
  over-approximation to loop invariants of the annotated loop
  $\mathtt{\{} \delta \mathtt{\}}\ \mathtt{while}\ \kappa\ 
  \mathtt{do}\ S_1; S_2; \cdots; S_m\ \mathtt{done}\ \mathtt{\{}
  \epsilon \mathtt{\}}$. 
  For any $\theta$ with $\theta \wedge \kappa \Rightarrow \precond
  (\overline{\iota} : S_1; S_2; \cdots; S_m)$, the sequence $\Xi
  (\theta \wedge \kappa, S_1, S_2, \ldots, S_m, \overline{\iota})$ has
  an inductive interpolant.
\end{cor}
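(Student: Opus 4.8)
The plan is to obtain the corollary as a direct instantiation of Proposition~\ref{proposition:approximation}. That proposition asserts that for any statement sequence $S_1; S_2; \cdots; S_m$ and any quantifier-free $\phi$ and $\psi$ over $\ox$ with $\phi \Rightarrow \precond(\psi : S_1; S_2; \cdots; S_m)$, the sequence $\Xi(\phi, S_1, \ldots, S_m, \psi)$ possesses an inductive interpolant. So the first and essentially only step is to set $\phi := \theta \wedge \kappa$ and $\psi := \overline{\iota}$.

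It then remains to check that the hypotheses of the proposition are met. First, $\theta = \gamma(\beta)$ is a quantifier-free formula in $\Prop[P]$, $\kappa$ is a $\mathsf{BExp}$ loop guard, and $\overline{\iota}$ is by assumption a quantifier-free over-approximation to the loop invariants (in the main loop it is $\epsilon \vee \kappa$); hence both $\phi$ and $\psi$ are quantifier-free formulae over $\ox$, as required. Second, the defining hypothesis of the corollary, $\theta \wedge \kappa \Rightarrow \precond(\overline{\iota} : S_1; S_2; \cdots; S_m)$, is literally $\phi \Rightarrow \precond(\psi : S_1; S_2; \cdots; S_m)$. Therefore Proposition~\ref{proposition:approximation} applies verbatim and yields an inductive interpolant of $\Xi(\theta \wedge \kappa, S_1, S_2, \ldots, S_m, \overline{\iota})$, which is the claim.

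Since this is a corollary, there is no genuine obstacle; the only point worth a second glance is that $\precond(\overline{\iota} : \cdot)$ may carry positively occurring universal quantifiers, but these are eliminated by Skolem constants as noted after the definition of $\precond$, and in any case the proposition only requires $\phi$ and $\psi$ themselves (not $\precond(\psi : \cdot)$) to be quantifier-free. Equivalently, one could inline the argument: the discussion preceding Proposition~\ref{proposition:approximation} shows that $\Xi(\theta \wedge \kappa, S_1, \ldots, S_m, \overline{\iota})$ is inconsistent whenever $\theta \wedge \kappa \Rightarrow \precond(\overline{\iota} : S_1; \cdots; S_m)$, and then the interpolation theorem for inconsistent sequences (Section~\ref{subsec:interpolation}) supplies the inductive interpolant directly.
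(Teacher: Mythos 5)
Your proposal is correct and matches the paper's own proof, which is literally just ``By Proposition~\ref{proposition:approximation}''; you instantiate $\phi := \theta \wedge \kappa$ and $\psi := \overline{\iota}$ exactly as intended, and your extra checks of the hypotheses are sound.
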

\begin{proof}
  By Proposition~\ref{proposition:approximation}.
\end{proof}

\subsubsection{Interpolating Under-Approximation}

For under-approximations, there is no necessary
condition. Nevertheless, Proposition~\ref{proposition:approximation}
is applicable with the pair $\underline{\iota} \wedge \kappa$
and $\theta$.

\begin{cor}
  \label{corollary:under-approximation}
  Let $\underline{\iota}$ be an under-approximation to loop invariants
  of the annotated loop $\mathtt{\{} \delta \mathtt{\}}\
  \mathtt{while}\ \kappa\ \mathtt{do}\ S_1; S_2; \cdots; S_m\
  \mathtt{done}\ \mathtt{\{} \epsilon \mathtt{\}}$.  For any $\theta$
  with $\underline{\iota} \wedge \kappa \Rightarrow \precond (\theta :
  S_1; S_2; \cdots; S_m)$, the sequence $\Xi (\underline{\iota} \wedge
  \kappa, S_1, S_2, \ldots, S_m, \theta)$ has an inductive
  interpolant.
\end{cor}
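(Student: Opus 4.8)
The plan is to derive Corollary~\ref{corollary:under-approximation} as an immediate instance of Proposition~\ref{proposition:approximation}, exactly as was done for Corollary~\ref{corollary:over-approximation}. Proposition~\ref{proposition:approximation} states that whenever $\phi \Rightarrow \precond(\psi : S_1; S_2; \cdots; S_m)$, the sequence $\Xi(\phi, S_1, \ldots, S_m, \psi)$ has an inductive interpolant. So the entire content of the corollary is the choice of instantiation: take $\phi := \underline{\iota} \wedge \kappa$ and $\psi := \theta$.

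First I would observe that the hypothesis of the corollary, namely $\underline{\iota} \wedge \kappa \Rightarrow \precond(\theta : S_1; S_2; \cdots; S_m)$, is precisely the hypothesis of Proposition~\ref{proposition:approximation} under this instantiation. Next I would note that $\underline{\iota}$, $\kappa$, and $\theta$ are all quantifier-free formulae over $\ox$ (the under-approximation and the loop guard are formulae over the program variables, and $\theta$ in the surrounding discussion is the concretization of an abstract conjecture, hence in $\Prop[P] \subseteq \Prop$), so $\underline{\iota} \wedge \kappa$ is a legitimate choice of $\phi$ and $\theta$ a legitimate choice of $\psi$. Then the conclusion of Proposition~\ref{proposition:approximation} gives directly that $\Xi(\underline{\iota} \wedge \kappa, S_1, S_2, \ldots, S_m, \theta)$ has an inductive interpolant, which is exactly the statement to be proved.

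There is essentially no obstacle here; the only thing worth flagging is the contrast with Corollary~\ref{corollary:over-approximation}. In the over-approximation case, Proposition~\ref{proposition:necessary-over-approximation} supplies a \emph{necessary} condition ($\iota \wedge \kappa \Rightarrow \precond(\overline{\iota} : S_1; \ldots; S_m)$ for any loop invariant $\iota \Rightarrow \overline{\iota}$), which justifies why interpolating the pair $\theta \wedge \kappa$ and $\overline{\iota}$ is sensible even when $\theta$ is merely a candidate. For the under-approximation there is no analogous necessary condition, so the hypothesis $\underline{\iota} \wedge \kappa \Rightarrow \precond(\theta : S_1; \ldots; S_m)$ must simply be checked on the fly (e.g.\ by an SMT query); when it holds, the corollary applies and yields the interpolant whose desuperscripted atomic predicates refine the abstraction. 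The proof itself is therefore a one-line appeal to Proposition~\ref{proposition:approximation}.

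\begin{proof}
  By Proposition~\ref{proposition:approximation}, instantiating $\phi :=
  \underline{\iota} \wedge \kappa$ and $\psi := \theta$.
\end{proof}
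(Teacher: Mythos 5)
Your proof is correct and is exactly the paper's proof: the corollary is a direct instantiation of Proposition~\ref{proposition:approximation} with $\phi := \underline{\iota} \wedge \kappa$ and $\psi := \theta$, and the paper likewise proves it by a one-line appeal to that proposition. The extra remarks about the contrast with the over-approximation case match the paper's surrounding discussion but are not part of the proof itself.
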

\begin{proof}
  By Proposition~\ref{proposition:approximation}.
\end{proof}

Generating atomic predicates from an incorrect conjecture $\theta$
should now be clear
(Algorithm~\ref{algorithm:generating-predicates}). 
Assuming that the incorrect conjecture satisfies the necessary
condition in
Proposition~\ref{proposition:necessary-over-approximation}, we simply
collect all desuperscripted atomic predicates 
in an inductive interpolant of $\Xi (\theta \wedge \kappa, S_1, S_2,
\ldots, S_m, \overline{\iota})$
(Corollary~\ref{corollary:over-approximation}). 
More atomic predicates can be obtained from an inductive
interpolant
of $\Xi (\underline{\iota} \wedge \kappa, S_1, S_2, \ldots, S_m,
\theta)$ if additionally
$\underline{\iota} \wedge \kappa \Rightarrow \precond (\theta : S_1;
S_2; \cdots; S_m)$
(Corollary~\ref{corollary:under-approximation}).

\begin{algorithm}
  \tcc{$\mathtt{\{} \delta \mathtt{\}}\ \mathtt{while}\ \kappa\
    \mathtt{do}\ S_1\mathtt{;} \cdots \mathtt{;} S_m\ \mathtt{done}\
    \mathtt{\{} \epsilon \mathtt{\}}$ : an annotated loop}
  \tcc{$\underline{\iota}, \overline{\iota}$ : under- and
    over-approximations to loop invariants}
  \KwIn{a formula $\theta \in \Prop[P]$ such that $\theta \wedge
    \kappa \Rightarrow \precond (\overline{\iota} : S_1; S_2; \cdots; S_m)$}
  \KwOut{a set of atomic predicates}

  $I$ := an inductive interpolant of $\Xi (\theta 
  \wedge \kappa, S_1, S_2, \ldots, S_m, \overline{\iota})$\;
  $Q$ := desuperscripted atomic predicates in $I$\;
  \If{$\underline{\iota} \wedge \kappa \Rightarrow \precond (\theta :
    S_1; S_2; \cdots; S_m)$}
  {
    $J$ := an inductive interpolants of $\Xi
    (\underline{\iota} \wedge \kappa, S_1, S_2, \ldots, S_m,
    \theta)$\;
    $R$ := desuperscripted atomic predicates in $J$\;
    $Q$ := $Q \cup R$\;
  }
  \Return {$Q$}
  \phantom{x}
  \caption{$\mathtt{PredicatesFromConjecture}(\theta)$}
  \label{algorithm:generating-predicates}
\end{algorithm}

\subsection{Atomic Predicates from Conflicting Abstract
  Counterexamples} 
\label{section:predicates-conflicting-answers}

Because of the abstraction, conflicting abstract counterexamples may
be given to the learning algorithm.  Consider the example in
Section~\ref{section:introduction}. Recall that $n \geq 0 \wedge x = n
\wedge y = n$ and $x + y = 0 \vee x > 0$ are the under- and
over-approximations respectively. Suppose there is only one atomic
predicate $y = 0$. The learning algorithm tries to infer a Boolean
formula $\lambda \in \Bool[b_{y = 0}]$.  Let us resolve
the equivalence queries $\EQ (\TT)$ and $\EQ (\FF)$. On the
equivalence query $\EQ (\FF)$, we check if $\FF$ is weaker than the
under-approximation by an SMT solver. It is not, and the SMT solver
gives the valuation $\nu_0 (n) = \nu_0 (x) = \nu_0 (y) = 1$ as a
witness. Applying the abstraction function $\alpha^*$ to $\nu_0$, the
mechanical teacher returns the abstract counterexample 
$b_{y = 0} \mapsto \FF$. The abstract counterexample is
intended to notify that the target formula $\lambda$ and
$\FF$ have different truth values when $b_{y = 0}$ is $\FF$. That is,
$\lambda$ is satisfied by the valuation $b_{y=0} \mapsto \FF$.

On the equivalence query $\EQ (\TT)$, the mechanical teacher checks if
$\TT$ is stronger than the over-approximation. It is not, and the SMT
solver now 
returns the valuation $\nu_1 (x) = 0, \nu_1 (y) = 1$ as a witness.
The mechanical teacher in turn computes 
$b_{y = 0} \mapsto \FF$ as the corresponding abstract
counterexample. The abstract counterexample notifies
that the target formula $\lambda$ and $\TT$ have
different truth values when $b_{y = 0}$ is $\FF$. That
is, $\lambda$ is not satisfied by the valuation $b_{y=0} \mapsto \FF$. 
Yet the target formula $\lambda$ cannot be satisfied and
unsatisfied by the valuation $b_{y = 0} \mapsto \FF$. 
We have conflicting abstract counterexamples.

Such conflicting abstract counterexamples arise because the
abstraction is too coarse. This gives us another chance to refine the
abstraction. For distinct valuations $\nu$ and $\nu'$, $\Gamma (\nu)
\wedge \Gamma (\nu')$ is inconsistent. For instance, $\Gamma (\nu_0) =
(n = 1) \wedge (x = 1) \wedge (y = 1)$, $\Gamma (\nu_1) = (x = 0)
\wedge (y = 1)$, and $\Gamma (\nu_1) \wedge \Gamma (\nu_0)$ is
inconsistent.

\begin{algorithm}
\zap{
  \tcc{$\mathtt{\{} \delta \mathtt{\}}\ \mathtt{while}\ \kappa\
    \mathtt{do}\ S_1; S_2; \cdots; S_m\ \mathtt{done}\
    \mathtt{\{} \epsilon \mathtt{\}}$ : an annotated loop}
}
  \KwIn{distinct valuations $\nu$ and $\nu'$ such that $\alpha^* (\nu)
    = \alpha^* (\nu')$}
  \KwOut{a set of atomic predicates}

  $X$ := $\Gamma (\nu)$\;
  $X'$ := $\Gamma (\nu')$\;
  \tcc{$X \wedge X'$ is inconsistent}
  $\rho$ := $\gamma^* (\alpha^* (\nu))$\;
  $Q$ := atomic predicates in an inductive interpolant of
  $[X, X' \vee \neg \rho]$\;
  \Return{$Q$}\;
  \phantom{x}
  \caption{$\mathtt{PredicatesFromConflict}(\nu, \nu')$}
  \label{algorithm:generating-predicate-conflicting-counterexamples}
\end{algorithm}

Algorithm~\ref{algorithm:generating-predicate-conflicting-counterexamples}
generates atomic predicates from conflicting abstract counterexamples.
Let $\nu$ and $\nu'$ be distinct valuations in $\Val{\ox}$. We compute
formulae $X = \Gamma (\nu)$ and $X' = \Gamma (\nu')$. Since $\nu$ and
$\nu'$ are conflicting, they correspond to the same abstract valuation
$\alpha^* (\nu) = \alpha^* (\nu')$. Let $\rho = \gamma^* (\alpha^*
(\nu))$. We have $X \Rightarrow \rho$ and $X' \Rightarrow
\rho$~\cite{VMCAI10}.  Recall that $X \wedge X'$ is inconsistent.
$[X, X' \vee \neg \rho]$ is also inconsistent for $X \Rightarrow
\rho$.
Algorithm~\ref{algorithm:generating-predicate-conflicting-counterexamples}
returns atomic predicates in an inductive interpolant of $[X, X' \vee
\neg \rho]$.

\section{Loop Invariant Inference Algorithms with Predicate
  Generation}
\label{section:algorithm}

Algorithm~\ref{algorithm:main-loop-new} is the main loop of inference
framework with predicate generation. The algorithm is the same as
Algorithm~\ref{algorithm:main-loop} except the gray-boxed parts.
\begin{algorithm}[h]
  \tcc{$\CEX$ : a set of counterexamples}
  \tcc{$\tau$ : a threshold to generate new atomic predicates}
  \tcc{$\mathtt{\{} \delta \mathtt{\}}\ \mathtt{while}\ \kappa\
    \mathtt{do}\ S_1; S_2; \cdots; S_m\ \mathtt{done}\
    \mathtt{\{} \epsilon \mathtt{\}}$ : an annotated loop}
  \KwOut{a loop invariant for the annotated loop}
  $\underline{\iota}$ := $\delta \vee \epsilon$\;
  $\overline{\iota}$ := $\epsilon \vee \kappa$\;
  \shade{$P := \mathtt{InitialAtomicPredicates()}$} \\
  \Repeat{a loop invariant is found}
  {
    \shade{\textbf{try}}

    \ \ \ \ \textbf{call} a learning algorithm for Boolean formulae
    where membership and\\
    \ \ \ \ equivalence queries are resolved by
    Algorithms~\ref{algorithm:membership} 
    and~\shade{\ref{algorithm:equivalence-query-resolution}} respectively\;
    \shade{\textbf{catch}
      $\mathtt{ConflictAbstractCEX} \rightarrow$} \\
    \shade{\ \ \ \ find distinct valuations $\nu$ and $\nu'$ in $\CEX$
      such that $\alpha^* (\nu) = \alpha^* (\nu')$\;}
    \shade{\ \ \ \ $P$ := $P \cup \mathtt{PredicatesFromConflict}
      (\nu, \nu')$\;} \\
    \shade{\textbf{catch}
      $\mathtt{ExcessiveRandomAnswers}(\theta) \rightarrow$} \\
    \shade{\ \ \ \ $P$ := $P \cup \mathtt{PredicatesFromConjecture}
      (\theta)$;} \\
    \shade{$\tau$ := $\lceil 1.3^{|P|} \rceil$;} \\
  }
  \phantom{x}
  \caption{Main Loop with Predicate Generation}
  \label{algorithm:main-loop-new}
\end{algorithm}

We first compute the initial set of atomic predicates by interpolating
$\underline{\iota}$ and $\neg \overline{\iota}$
(Section~\ref{section:initial-predicates}). With the initial set, we
start the learning process until the algorithm finds a loop invariant
or there is an exception raised. Exceptions basically mean that the
current set of predicates might not be enough to find a loop
invariant. We need in this case to find more predicates using one of
the algorithms explained in
Section~\ref{section:predicate-generation-by-interpolation}. 

The learning algorithm finds conflicting abstract counterexamples when
the equivalence query resolution algorithm gives a random
counterexample that contradicts the previous ones or the current
predicate abstraction is too coarse. Since we cannot distinguish the
two, we always generate more predicates using
Algorithm~\ref{algorithm:generating-predicate-conflicting-counterexamples},
hoping that we can find a loop invariant in the next iteration.

The $\mathtt{ExcessiveRandomAnswers}$ exception is raised when our new
equivalence query resolution algorithm, which is detailed later,
suspects that it generates too many random counterexamples because of
the coarse predicate abstraction. In this case, we generate more
predicates using Algorithm~\ref{algorithm:generating-predicates}.

Note that we start the learning algorithm from the scratch every time
we generate more predicates. The reason is because we use CDNF
algorithm for learning that handles only a fixed number of Boolean
variables. Recently, Chen \etal~\cite{CW:12:LBFI} propose a variant
of CDNF algorithm that supports incremental learning. We can also
adopt this algorithm to improve the efficiency of the overall
technique.

\begin{algorithm}[h]
  \tcc{$\CEX$ : a set of counterexamples}
  \tcc{$\tau$ : a threshold to generate new atomic predicates}
  \tcc{$\mathtt{\{} \delta \mathtt{\}}\ \mathtt{while}\ \kappa\
    \mathtt{do}\ S_1; S_2; \cdots; S_m\ \mathtt{done}\
    \mathtt{\{} \epsilon \mathtt{\}}$ : an annotated loop}
  \tcc{$\underline{\iota}, \overline{\iota}$ : under- and
    over-approximations to loop invariants}
  \KwIn{an equivalence query $\EQ (\beta)$ with $\beta \in
    \Bool[\obp]$} 
  \KwOut{$\YES$ or an abstract counterexample}
  $\theta$ := $\gamma (\beta)$\;
  \lIf{$\delta \Rightarrow \theta$ and $\theta \Rightarrow \epsilon
    \vee \kappa$ and $\theta \wedge \kappa \Rightarrow \precond
    (\theta : S_1; S_2; \cdots; S_m)$}
  \Return{$\YES$}\;
  \uIf{$\nu \models \neg (\underline{\iota} \Rightarrow \theta)$ or
    $\nu \models \neg (\theta \Rightarrow \overline{\iota})$ or
    $\nu \models \neg (\theta \wedge \kappa \Rightarrow \precond
    (\overline{\iota} : S_1; S_2; \cdots; S_m))$}
  {\shade{$\CEX := \CEX \cup \set{\nu}$\;} \Return{$\alpha^* (\nu)$}\;}
  \shade{\textbf{if}~the number of random abstract counterexamples
    $\leq \tau$~\textbf{then}} \\
  \ \ \ \ \ \Return{a random abstract counterexample}\;
  \shade{\textbf{else}} \\
  \ \ \ \ \ \shade{\textbf{throw}~$\mathtt{ExcessiveRandomAnswers}(\theta)$\;}
  \medskip
  \caption{Equivalence Query Resolution with Predicate Generation}
  \label{algorithm:equivalence-query-resolution}
\end{algorithm}

The equivalence query resolution algorithm is given in
Algorithm~\ref{algorithm:equivalence-query-resolution}. Again, we put
gray-boxes to denote the modified parts. As
Algorithm~\ref{algorithm:equivalence}, the mechanical teacher first
checks if the concretization of the abstract conjecture is a loop
invariant. If so, it returns $\YES$ and concludes the loop invariant
inference algorithm. Otherwise, the mechanical teacher compares the
concretization of the abstract conjecture with approximations to loop
invariants. If the concretization is stronger than the
under-approximation, weaker than the over-approximation, or it does
not satisfy the necessary condition given in
Proposition~\ref{proposition:necessary-over-approximation}, an
abstract counterexample is returned after recording the witness
valuation~\cite{VMCAI10,APLAS10}. The witnessing valuations are needed
to synthesize atomic predicates in
Algorithm~\ref{algorithm:main-loop-new} when conflicts occur.

If the concretization is not a loop invariant and falls between both
approximations to loop invariants, there are two possibilities. The
current set of atomic predicates is sufficient to express a loop
invariant; the learning algorithm just needs a few more iterations to
infer a solution. Or, the current atomic predicates are insufficient
to express any loop invariant; the learning algorithm cannot derive a
solution with these predicates. Since we cannot tell which scenario
arises, a threshold is deployed heuristically. If the number of random
abstract counterexamples is less than the threshold, we give the
learning algorithm more time to find a loop invariant. Only when the
number of random abstract counterexamples exceeds the threshold, can
we synthesize more atomic predicates for abstraction
refinement. Intuitively, the current atomic predicates are likely to
be insufficient if lots of random abstract counterexamples have been
generated. In this case, we raise $\mathtt{ExcessiveRandomAnswers}$
exception to synthesize more atomic predicates from the incorrect
conjecture in Algorithm~\ref{algorithm:main-loop-new}. Observe that in
Algorithm~\ref{algorithm:main-loop-new}, threshold $\tau$ is set to
$\lceil 1.3^{|P|} \rceil$, the approximate size of the search space,
which we found empirically.

\section{Experimental Results}
\label{section:experimental-results}

We have implemented the proposed technique in OCaml. 
In our implementation, the SMT solver \textsc{Yices} and the
interpolating theorem prover \textsc{CSIsat}~\cite{csisat} are used
for query resolution and interpolation respectively.  In addition to
the examples in~\cite{VMCAI10}, we add two more examples:
\texttt{riva} is the largest loop expressible in our simple language
from Linux\footnote{In Linux 2.6.30
  \texttt{drivers/video/riva/riva\_hw.c:nv10CalcArbitration()}}, and
\texttt{tar} is extracted from Tar\footnote{In Tar 1.13
  \texttt{src/mangle.c:extract\_mangle()}}.  All examples are
translated into annotated loops manually.  Data are the average of 100
runs and collected on a 2.4GHz Intel Core2 Quad CPU with 8GB memory
running Linux 2.6.31 (Table~\ref{table:experiments}).

\begin{table}[t]
  \scriptsize
  \caption{Experimental Results.\newline
    \small
    $P$ : \# of atomic predicates,
    $\MEM$ : \# of membership queries,  
    $\EQ$ : \# of equivalence queries, 
    $\mathit{RE}$ : \# of the learning algorithm restarts,
    $T$ : total elapsed time (s).
  }
  \centering
  \begin{tabular}{|c|r||r|r|r|r|r||r|r|r|r|r||r|r|}
    \hline
    \multirow{2}*{case} & \multirow{2}*{$SIZE$} &
    \multicolumn{5}{|c|}{\textsc{Previous~\cite{VMCAI10}}} &
    \multicolumn{5}{|c|}{\textsc{Current}} &
    \multicolumn{2}{|c|}{\textsc{BLAST~\cite{McMillan06}}}\\
    \cline{3-14}
    & & $P$ & $\MEM$ & $\EQ$ & $\mathit{RE}$ &
    $T$ & $P$ & $\MEM$ & $\EQ$ & $\mathit{RE}$ &
    $T$ & $P$ & $T$ \\
    \hline 
    \texttt{ide-ide-tape}     & 16&  6& 13& 7& 1&  0.05&  4&  6&  5&
    1&  0.05 & 21 & 1.31(1.07)\\ 
    \hline
    \texttt{ide-wait-ireason} &  9&  5& 790& 445&  33&  1.51&  5&122&
    91& 7& 1.09 & 9 & 0.19(0.14)\\ 
    \hline
    \texttt{parser}           & 37& 17& 4,223& 616&  13& 13.45&  9&
    86& 32&  1& 0.46 & 8 & 0.74(0.49)\\ 
    \hline
    \texttt{riva}             & 82& 20& 59& 11& 2& 0.51& 7& 14& 5& 1&
    0.37 & 12 & 1.50(1.17)\\ 
    \hline
    \texttt{tar}              &  7&  6&$\infty$&$\infty$&$\infty$&$\infty$& 
    2 & 2& 5& 1& 0.02 & 10 & 0.20(0.17)\\ 
    \hline
    \texttt{usb-message}      & 18& 10& 21& 7& 1& 0.10& 3& 7&  6&
    1&  0.04 & 4 & 0.18(0.14)\\ 
    \hline
    \texttt{vpr}              &  8& 5& 16& 9& 2& 0.05&  1&  1& 3&
    1&  0.01 & 4 & 0.13(0.10)\\ 
    \hline
  \end{tabular}
\label{table:experiments}
\end{table}

In the table, the column \textsc{Previous} represents the work
in~\cite{VMCAI10} where atomic predicates are chosen
heuristically. Specifically, all atomic predicates in pre- and
post-conditions, loop guards, and conditions of \texttt{if} statements
are selected. The column \textsc{Current} gives the results for our
automatic predicate generation technique. Interestingly, heuristically
chosen atomic predicates suffice to infer loop invariants for
all examples except \texttt{tar}. For the \texttt{tar} example, the
learning-based loop invariant inference algorithm fails to find a loop
invariant due to ill-chosen atomic predicates. In contrast, our
new algorithm is able to infer a loop invariant for the \texttt{tar}
example in 0.02s.
The number of atomic predicates can be significantly reduced as well.
Thanks to a smaller number of atomic predicates, loop invariant inference
becomes more economical in these examples. Without predicate generation,
four of the six examples take
more than one second. Only one of these examples takes
more than one second using the new technique. Particularly, the
\texttt{parser} example is improved in orders of magnitude.

The column \textsc{BLAST} gives the results of lazy abstraction
technique with interpolants implemented in
\textsc{BLAST}~\cite{McMillan06}. In addition to the total elapsed
time, we also show the preprocessing time in parentheses. Since the
learning-based framework does not construct abstract models,
our new technique outperforms \textsc{BLAST} in all cases but one
(\texttt{ide-wait-ireason}). If we disregard the time for
preprocessing in \textsc{BLAST}, the learning-based technique still
wins three cases (\texttt{ide-ide-tape}, \texttt{tar}, \texttt{vpr}) and 
ties one (\texttt{usb-message}). Also note that the number of atomic
predicates generated by the new technique is always smaller except
\texttt{parser}. Given the simplicity of the learning-based
framework, our preliminary experimental results suggest a promising
outlook for further optimizations.

\subsection{\texttt{tar} from Tar}
\label{subsection:tar}

\begin{figure}[t]
  \begin{equation*}
    \begin{array}{l}
      \assertion{\mathit{size} = M \land \mathit{copy} = N} \\
      1\ \mathtt{while}\ \mathit{size} > 0\ \mathtt{do} \\
      2\ \tab\mathit{available} := \mathtt{nondet}; \\
      3\ \tab\mathtt{if}\ \mathit{available} > \mathit{size}\
      \mathtt{then} \\
      4\ \tab\tab\mathit{copy} := \mathit{copy} + \mathit{available}; \\
      5\ \tab\tab\mathit{size} := \mathit{size} - \mathit{available}; \\
      6\ \mathtt{done}\\
      \assertion{\mathit{size} = 0 \implies \mathit{copy} = M + N}
    \end{array}
  \end{equation*}
  \caption{A Sample Loop in Tar}
  \label{figure:tar}
\end{figure}

This simple fragment is excerpted from the code for copying two
buffers. $M$ items in the source buffer are copied to the target
buffer that already has $N$ items. The variable $\mathit{size}$ keeps
the number of remaining items in the source buffer and $\mathit{copy}$
denotes the number of items in the target buffer after the last
copy. In each iteration, an arbitrary number of items are copied and
the values of $\mathit{size}$ and $\mathit{copy}$ are updated
accordingly.

Observe that the atomic predicates in the program text cannot express
any loop invariant that proves the specification. However, our new
algorithm successfully finds the following loop invariant in this
example:

\begin{equation*}
  M + N \le \mathit{copy} + \mathit{size} \land \mathit{copy} +
  \mathit{size} \le M + N
\end{equation*}

The loop invariant asserts that the number of items in both
buffers is equal to $M + N$. It requires atomic
predicates unavailable from the program text. Predicate generation is
essential to find loop invariants for such tricky loops.

\subsection{\texttt{parser} from SPEC2000 Benchmarks}
\label{subsection:parser}

\begin{figure}[b]
  \begin{equation*}
    \begin{array}{l}
      \mathtt{\{}\ \mathit{phase} = \false \wedge \mathit{success} = \false
      \wedge \mathit{give\_up} = \false \wedge \mathit{cutoff} = 0
      \wedge \mathit{count} = 0 \ \mathtt{\}}\\
      \;\; 1\ \mathtt{while}\ \neg(\mathit{success} \vee \mathit{give\_up})\
      \mathtt{do}\\
      \;\; 2\ \ \ \ \ \mathit{entered\_phase}\ \mathtt{:=}\ \false\mathtt{;}\\
      \;\; 3\ \ \ \ \ \mathtt{if}\ \neg \mathit{phase}\ \mathtt{then}\\
      \;\; 4\ \ \ \ \ \ \ \ \ \mathtt{if}\ \mathit{cutoff} = 0\ \mathtt{then}\
      \mathit{cutoff}\ \mathtt{:=}\ 1\mathtt{;}\\
      \;\; 5\ \ \ \ \ \ \ \ \ \mathtt{else}\ \mathtt{if}\ \mathit{cutoff} = 1 \wedge
      \mathit{maxcost} > 1\ \mathtt{then}\ \mathit{cutoff}\
      \mathtt{:=}\ \mathit{maxcost}\mathtt{;}\\
      \;\; 6\ \ \ \ \ \ \ \ \ \ \ \ \ \ \ \ \mathtt{else}\ 
      \mathit{phase}\ \mathtt{:=}\ \true\mathtt{;}\ 
      \mathit{entered\_phase}\ \mathtt{:=}\ \true\mathtt{;}\ 
      \mathit{cutoff}\ \mathtt{:=}\ 1000\mathtt{;}\\
      \;\; 7\ \ \ \ \ \ \ \ \ \mathtt{if}\ \mathit{cutoff} = \mathit{maxcost}
      \wedge \neg \mathit{search}\ \mathtt{then}\ 
      \mathit{give\_up}\ \mathtt{:=}\ \true\mathtt{;}\\
      \;\; 8\ \ \ \ \ \mathtt{else}\\
      \;\; 9\ \ \ \ \ \ \ \ \ \mathit{count}\ \mathtt{:=}\ \mathit{count} + 1\mathtt{;}\\
      10\ \ \ \ \ \ \ \ \ \mathtt{if}\ \mathit{count} > \mathit{words}\
      \mathtt{then}\ \mathit{give\_up}\ \mathtt{:=}\ \true\mathtt{;}\\
      11\ \ \ \ \ \mathtt{if}\ \mathit{entered\_phase}\ \mathtt{then}\
      \mathit{count}\ \mathtt{:=}\ 1\mathtt{;}\\
      12\ \ \ \ \ \mathit{linkages}\ \mathtt{:=}\ \mathtt{nondet;}\\
      13\ \ \ \ \ \mathtt{if}\ \mathit{linkages} > 5000\ \mathtt{then}\ 
      \mathit{linkages}\ \mathtt{:=}\ 5000\mathtt{;}\\
      14\ \ \ \ \ \mathit{canonical}\ \mathtt{:=}\ 0\mathtt{;}\ 
      \mathit{valid}\ \mathtt{:=}\ 0\mathtt{;}\\
      15\ \ \ \ \ \mathtt{if}\ \mathit{linkages} \neq 0\ \mathtt{then}\\
      16\ \ \ \ \ \ \ \ \ \mathit{valid}\ \mathtt{:= nondet;}\\
      17\ \ \ \ \ \ \ \ \ \mathtt{assume}\ 0 \leq \mathit{valid} \wedge
      \mathit{valid} \leq \mathit{linkages}\mathtt{;}\\
      18\ \ \ \ \ \ \ \ \ \mathit{canonical}\ \mathtt{:=}\
      \mathit{linkages}\mathtt{;}\\
      19\ \ \ \ \ \mathtt{if}\ \mathit{valid} > 0\ \mathtt{then}\
      \mathit{success}\ \mathtt{:=}\ \true\mathtt{;}\\
      20\ \mathtt{done}\\
      \mathtt{\{}\ 
      (\mathit{valid} > 0 \vee \mathit{count} > \mathit{words} \vee
      (\mathit{cutoff} = \mathit{maxcost} \wedge \neg
      \mathit{search})) \wedge\\
      \ \ \mathit{valid} \leq \mathit{linkages} \wedge
      \mathit{canonical} = \mathit{linkages} \wedge
      \mathit{linkages} \leq 5000
      \ \mathtt{\}}
    \end{array}
  \end{equation*}
  \caption{A Sample Loop in SPEC2000 Benchmark PARSER}
  \label{figure:parser}
\end{figure}

For the \texttt{parser} example (Figure~\ref{figure:parser}), 9 atomic
predicates are generated. These atomic predicates 
are a subset of the 17 atomic predicates from the program text.
Every loop invariant
found by the loop invariant inference algorithm contains all 9 atomic
predicates. This suggests that there are no redundant predicates. 
Few atomic predicates make loop invariants easier to comprehend.
For instance, the following loop invariant summarizes the condition when
$\mathit{success}$ or $\mathit{give\_up}$ is true:
\begin{equation*}
  \begin{array}{l}
    (\mathit{success} \vee \mathit{give\_up}) \timplies \\
    \hspace{.25in}
    (\mathit{valid} \ne 0 \vee
    \mathit{cutoff} = \mathit{maxcost} \vee
    \mathit{words} < \mathit{count}) \wedge\\
    \hspace{.25in}
    (\neg \mathit{search} \vee
    \mathit{valid} \ne 0 \vee 
    \mathit{words} < \mathit{count}) \wedge\\
    \hspace{.25in}
    (\mathit{linkages} = \mathit{canonical} \wedge
    \mathit{linkages} \ge \mathit{valid} \wedge
    \mathit{linkages} \le 5000)
  \end{array}
\end{equation*}

The invariant is simpler and thus easier to understand than the one
presented in~\cite{VMCAI10}. The right side of the implication
summarizes the condition when $\mathit{success}$ or
$\mathit{give\_up}$ becomes true.

Fewer atomic predicates also lead to a smaller standard deviation of
the execution time. The execution time now ranges 
from 0.36s to 0.58s with the standard deviation equal to 0.06. In contrast,
the execution time for~\cite{VMCAI10} ranges from 1.20s to
80.20s with the standard deviation equal to 14.09. 
By Chebyshev's inequality, the new algorithm infers a loop invariant
in one second with probability greater than $0.988$.
With a compact set of atomic predicates, loop invariant inference
algorithm performs rather predictably.

\subsection{\texttt{ide-wait-ireason} from Linux Device Driver}
\label{subsection:ide-wait-ireason}
\begin{figure}[h]
  \begin{equation*}
    \begin{array}{l}
      \assertion{\mathit{retries} = 100 \land
        (\lnot \mathit{ireason\_has\_ATAPI\_COD} \lor
        \mathit{ireason\_has\_ATAPI\_IO})} \\
      1\ \mathtt{while}\ \mathit{retries} \ne 0 \land 
      (\lnot \mathit{ireason\_has\_ATAPI\_COD} \lor
      \mathit{ireason\_has\_ATAPI\_IO})\ 
      \mathtt{do} \\
      2\ \tab\mathit{retries} := \mathit{retries} - 1; \\
      3\ \tab\mathit{ireason\_has\_ATAPI\_COD} := \mathtt{nondet}; \\
      4\ \tab\mathit{ireason\_has\_ATAPI\_IO} := \mathtt{nondet}; \\
      5\ \tab\mathtt{if}\ \mathit{retries} = 0\ \mathtt{then} \\
      6\ \tab\tab\mathit{ireason\_has\_ATAPI\_COD} := \true; \\
      7\ \tab\tab\mathit{ireason\_has\_ATAPI\_IO} := \false; \\
      8\ \mathtt{done}\\
      \assertion{\mathit{retries} < 100 \land
        \mathit{ireason\_has\_ATAPI\_COD} \land
        \lnot \mathit{ireason\_has\_ATAPI\_IO}}
    \end{array}
  \end{equation*}
  \caption{A Sample Loop in Linux IDE Driver}
  \label{figure:ide-wait-ireason}
\end{figure}

In the \texttt{ide-wait-ireason} example
(Figure~\ref{figure:ide-wait-ireason}), predicate generation 
performs better even though it generates the same number of atomic
predicates. This is because the
technique can synthesize the atomic predicate $\mathit{retries} \le
100$ which does not appear in the program text but is essential to
loop invariants. Surely this atomic predicate is expressible by the two
atomic predicates $\mathit{retries} = 100$ and $\mathit{retries} <
100$ from the program text. However the search space is significantly
reduced with the more succinct atomic predicate $\mathit{retries} \leq
100$. Subsequently, the learning algorithm only needs a quarter of
queries to infer a loop invariant.


\section{Conclusions}
\label{section:conclusions}

A predicate generation technique for learning-based loop invariant
inference was presented. The technique applies the interpolation
theorem to synthesize atomic predicates implicitly implied by program
texts. To compare the efficiency of the new technique, 
examples excerpted from Linux, SPEC2000, and Tar source codes were
reported. The learning-based loop invariant inference algorithm
is more effective and performs much better in these realistic examples. 

More experiments are always needed. Especially, we would like to have
more realistic examples which require implicit predicates unavailable in
program texts. Additionally,
loops manipulating arrays often require quantified loop invariants
with linear inequalities. Extension to quantified loop invariants is
also important.


\bibliographystyle{splncs-sorted}
\bibliography{ap-gen-lmcs}

\end{document}